\crefname{section}{§}{§§}
\Crefname{section}{§}{§§}
\newtheorem{theorem}{Theorem}
\newtheorem{lemma}{Lemma}
\newtheorem{definition}{Definition}
\newtheorem{observation}{Observation}
\begin{document}

\title{Survivability in Time-varying Networks}
\author{
\IEEEauthorblockN{Qingkai Liang and Eytan Modiano}
\IEEEauthorblockA{Laboratory for Information and Decision Systems\\Massachusetts Institute of Technology, Cambridge, MA}
\thanks{This work was supported by NSF Grants CNS-1116209 and AST-1547331.}
}

\maketitle

\begin{tikzpicture}[remember picture, overlay]
\node at ($(current page.north) + (-3in,-0.5in)$) {Technical Report};
\end{tikzpicture}

\begin{abstract}
Time-varying graphs are a useful model for networks with dynamic connectivity such as vehicular networks, yet, despite their great modeling power, many important features of time-varying graphs are still poorly understood. In this paper, we study the survivability properties of time-varying networks against unpredictable interruptions. We first show that the traditional definition of survivability is not effective in time-varying networks, and propose a new survivability framework. To evaluate the survivability of time-varying networks under the new framework, we propose two metrics that are analogous to MaxFlow and MinCut in static networks. We show that some fundamental survivability-related results such as Menger's Theorem only conditionally hold in time-varying networks. Then we analyze the complexity of computing the proposed metrics and develop approximation algorithms. Finally, we conduct trace-driven simulations to demonstrate the application of our survivability framework in the robust design of a real-world bus communication network.
\end{abstract}

\section{Introduction}
Time-varying graphs have emerged as a useful model for networks with time-varying topology, especially in the context of communication networks. Examples include vehicular ad hoc networks \cite{vehicular, trace}, space communication networks \cite{space1,space2}, mobile sensor networks \cite{mobile-sensor1,mobile-sensor2}, whitespace networks\footnote{In whitespace networks, the states of secondary links change over time due to primary users' channel reclamation/release.} \cite{whitespace1,whitespace2,whitespace3} and millimeter-wave (mmWave) networks\footnote{In a mmWave network with tunable directional antennas, the network topology could vary with the dynamic adjustment of beam directions.} \cite{mmWave}. In Figure \ref{snap}, we illustrate a simple time-varying graph and its snapshots over 3 time slots.

In many applications of time-varying networks, transmission reliability is of a great concern. For example, it is critical to guarantee transmission reliability for vehicular networks that are often used to exchange traffic and emergency information; it is also crucial to provide robustness against unexpected shadowing for mmWave networks \cite{mmWave}.
Unfortunately, time-varying networks are particularly vulnerable due to their constantly changing topology that results from different kinds of interruptions. One type of interruptions are called \emph{intrinsic} interruptions which originate from the inherent nature of the network, such as node mobility in vehicular networks. For certain types of networks, such intrinsic interruptions are often \emph{predictable}. For example, it is easy to predict the temporal patterns of topology for a time-varying network formed by either public buses \cite{vehicular, trace} or satellites \cite{space1, space2} which have fixed tours and schedules; in low-duty-cycle sensor networks \cite{duty1,duty2}, the sleep/wake-up pattern is periodic and can be predicted accurately; in whitespace networks, the states of secondary links in the next few hours can be known a prior by using the whitespace database \cite{databse}; a recent study \cite{human-mobility} also shows that human mobility has 93\% potential predictability.
In contrast, the other type of interruptions are \emph{extrinsic} and \emph{unpredictable}. For example, the predictions about the evolution of network topology are prone to errors and could be inaccurate due to various unforeseen factors such as unexpected obstacles and hardware malfunctions. These unpredictable disruptions may greatly degrade network performance and are referred to as \textbf{failures}.  The goal of this paper is to understand the robustness of time-varying networks against unpredictable interruptions (failures) while treating those predictable interruptions as an inherent feature of the network.

Due to the unpredictability of failures, it is desirable to evaluate the \emph{worst-case survivability}. In static networks, this is usually defined to be the ability to survive a certain number of failures as measured by the mincut of the graph. However, this definition is not effective in time-varying networks. By its very nature, a time-varying network may have different topologies at different instants, so its connectivity or survivability must be measured over a long time interval. To be more specific, we would like to highlight two important temporal features that are neglected by the traditional notion of survivability.

\begin{figure}[tbp]
\begin{center}
\includegraphics[width=3.3in]{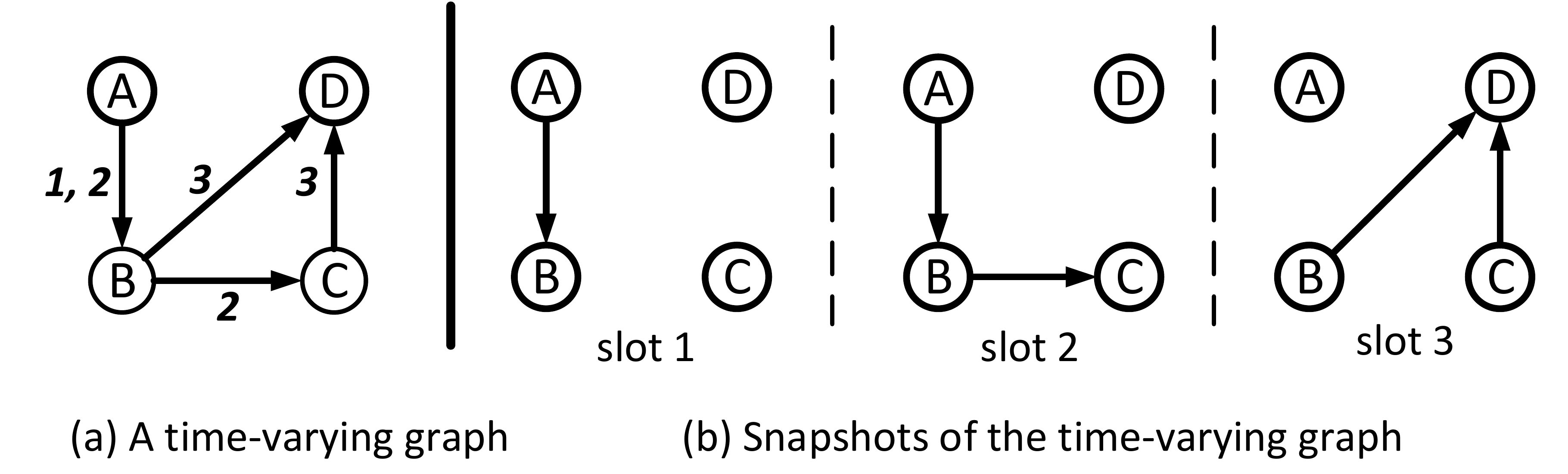}
\caption{(a) The original time-varying graph, where the numbers next to each edge indicate the slots when that edge is active. The traversal delay over each edge is one slot. (b) Snapshots of the time-varying graph.}
\label{snap}
\end{center}\vspace{-3mm}
\end{figure}

First, failures have significantly different durations in a time-varying network. For example, an unexpected obstacle may only disable the link between two nodes for several seconds, after which the link reappears. In contrast, the traditional definition of survivability is intended for a static environment and fails to account for links reappearing. The duration of failures has a crucial impact on the performance of time-varying networks; for example, in the time-varying network shown in Figure \ref{snap}, an one-slot failure of any link cannot separate node $\mathsf{A}$ and node $\mathsf{D}$ while a two-slot failure (i.e., a failure that spans two consecutive slots) can disconnect $\mathsf{D}$ from $\mathsf{A}$ by disabling link $\mathsf{A\rightarrow B}$ in the first two slots.

Second, failures may occur at different instants. This feature is obscured in static networks but has a great influence on time-varying networks due to their changing connectivity. For example, if a two-slot failure occurs to link $\mathsf{A\rightarrow B}$ at the beginning of slot 2, node $\mathsf{D}$ is still reachable from node $\mathsf{A}$ within the three slots; however, if the two-slot failure happens  at the beginning of slot 1, there is no way to travel from $\mathsf{A}$ to $\mathsf{D}$ within the three slots.

To handle the above non-trivial temporal factors, we propose a new survivability framework for time-varying networks. Our framework captures both the \emph{number} and the \emph{duration} of failures. The contributions of this paper are in the following four areas:

\noindent $\bullet$ {\textbf{Model.} We propose a new survivability framework, i.e., $(n,\delta)$-survivability, where the values of $n$ and $\delta$ characterize the number and the duration of failures the network can tolerate. Moreover, by tuning the two parameters, our framework can generalize many existing survivability models. We further propose two metrics, namely  $\mathsf{MinCut_{\delta}}$ and $\mathsf{MaxFlow_{\delta}}$, in order to assess robustness in time-varying networks. }

\noindent $\bullet$ {\textbf{Theory.} We provide new graph-theoretic results that highlight the difference between static and time-varying graphs. For example, we show that some fundamental survivability-related results such as Menger's Theorem\footnote{In graph theory, Menger's Theorem is a special case of the maxflow-mincut theorem, which states that the maximum number of edge- or node-disjoint paths equals to the size of the minimum edge or node cut, respectively.} only conditionally hold in time-varying graphs. }

\noindent $\bullet$ {\textbf{Computation.} Due to the difference between static and time-varying graphs, the evaluation of survivability becomes very challenging in time-varying networks. We analyze the complexity of computing the proposed survivability metrics and develop efficient approximation algorithms.}

\noindent $\bullet$ {\textbf{Application.} We conduct trace-driven simulations to demonstrate the application of our framework in a real-world communication network used in a public transportation system. It is shown that our survivability framework has strong modeling power and is more suitable for time-varying networks than existing approaches.}

The remainder of this paper is organized as follows. In Section \ref{graph_model}, we formalize the model of time-varying graphs. In Section \ref{survive_model}, the new survivability framework and its associated metrics are introduced. In Section \ref{comp}, we investigate some computational issues in the proposed framework. In Section \ref{app1}, trace-driven simulations are conducted to demonstrate the application of our framework in a real-world bus communication network. Finally, related work and conclusions are given in Sections \ref{related_works} and \ref{conclusion}, respectively.
\section{Model of Time-varying Graphs}\label{graph_model}
In this section, we formalize the model of time-varying graphs and introduce some important terminology and assumptions that will be frequently used throughout the paper. A useful tool for transforming time-varying graphs is also introduced.
\subsection{Definitions and Assumptions}\label{def}
Time-varying graphs are a high-level abstraction for networks with time-varying connectivity. The formal definition, first proposed in \cite{TVG2}, is as follows.

\begin{definition}[{Time-Varying Graph}]\label{def_time}
A time varying graph $\mathcal{G}=(G,\mathcal{T},\rho,\zeta)$ has the following components: \\
(i) Underlying (static) digraph $G=(V,E)$;\\
(ii) Time span $\mathcal{T}\subseteq \mathbb{T}$, where $\mathbb{T}$ is the time domain;\\
(iii) Edge-presence function $\rho: E\times \mathcal{T}\mapsto \{0,1\}$, indicating whether a given edge is active at a given instant; \\
(iv)  Edge-delay function $\zeta: E\times \mathcal{T}\mapsto \mathbb{T}$, indicating the time spent on crossing a given edge at
a given instant.
\end{definition}

\noindent This model can be naturally extended by adding a node-presence function and a node-delay function. However, it is trivial to transform node-related functions to edge-related functions by the technique called node splitting (see \cite{LP}, Chapter 7.2); thus, it suffices to consider the above edge-version characterization.

In this paper, we consider a \emph{discrete and finite} time span, i.e., $\mathcal{T}=\{1,2,\cdots,T\}$, where $T$ is a bounded integer indicating the time horizon of interests, measured in the number of slots. In practice, $T$ may have different physical meanings. For instance, it may refer to the deadline of packets or delay tolerance in delay-tolerant networks; it may also correspond to the period of a network whose topology varies periodically (e.g., satellite networks with periodical orbits). The slot length of a time-varying graph is arbitrary as long as it can capture topology changes in sufficient granularity.

Under the discrete-time model, the edge-delay function $\zeta$ can take values from $\mathbb{N}=\{0,1,\cdots\}$. Note that \emph{zero} delay means that the time used for crossing an edge is negligible as compared to the slot length.
Throughout the rest of this paper, we consider the case where edge delay is one slot, i.e., $\zeta(e,t)=1$ for any $e\in E$ and $t\in \mathcal{T}$, however, it is trivial to extend the analysis to arbitrary traversal time.

The edge-presence function $\rho$ indicates the \emph{predictable} topology changes in a time-varying network. Examples of such predictable topology changes include those in a space communication network  with known orbits\cite{space1,space2}, in a mobile social network  consisting of students who
share fixed class schedules\cite{mobile-social}, in a low-duty-cycle sensor network with periodic sleep/wake-up patterns \cite{duty1,duty2}, in a whitespace network with planned channel reclamation\cite{whitespace1,whitespace2}, in a mmWave network  with scheduled beam steering\cite{mmWave}, etc. In contrast, \emph{unpredictable} topology changes (also referred to as \textbf{failures} in this paper) include those caused by unexpected shadowing, unscheduled channel reclamation, hardware malfunctions, etc. Note that this model does not require perfect predictions of future topology changes since any prediction errors can be treated as failures.

\subsection{Terminology}

\begin{definition}[Contact]
There exists a contact from node $u$ to node $v$ in time slot $t$ if $e=(u,v)\in E$ and $\rho(e,t)=1$. This contact is denoted by $(e,t)$ or $(uv, t)$.
\end{definition}

\noindent Intuitively, a contact is a ``temporal edge", indicating the activation of a certain edge in a certain time slot. In the example shown in Figure \ref{snap}, there exists a contact $\mathsf{(AB,1)}$, showing that link $\mathsf{A\rightarrow B}$ is active in slot 1. 

\begin{definition}[{Journey} \cite{TVG1}]
In a time-varying graph, a journey from node $s$ to node $d$ is a sequence of contacts:
$(e_1,t_1)\rightarrow(e_2,t_2)\rightarrow\cdots\rightarrow (e_n,t_n)$ such that for any $i<n$\\
(i) ${\mathsf{start}}(e_1)=s$, $\mathsf{end}(e_n)=d$;\\
(ii)  $\mathsf{end}(e_{i})=\mathsf{start}(e_{i+1})$; \\
(iii) $\rho(e_i,t_i)=1$;\\ 
(iv) $t_{i+1}> t_i$ and $t_n\le T$. 
\end{definition}

\noindent Intuitively, a journey is just a ``time-respecting" path. Conditions (i)-(ii) mean that intermediate edges used by a journey are spatially connected. Condition (iii) requires that intermediate edges remain active when traversed. Condition (iv) indicates that the usage of intermediate edges must respect time and the journey should be completed before the time horizon $T$.  For example, there exists a journey from $\mathsf{A}$ to $\mathsf{D}$ in Figure \ref{snap}: $\mathsf{(AB,1)\rightarrow (BC,2)\rightarrow (CD, 3)}$ when $T=3$. 

\begin{definition}[{Reachability}]
Node $d$ is reachable from node $s$ if there is a journey from $s$ to $d$.
\end{definition}

\noindent Intuitively, reachability can be regarded as ``temporal connectivity" which indicates whether two nodes can communicate within $T$ slots. For example, node $\mathsf{D}$ is reachable from node $\mathsf{A}$ in Figure \ref{snap}, meaning that a message from $\mathsf{A}$ can reach $\mathsf{D}$ within $T=3$ slots.

\subsection{A Useful Tool: Line Graph}\label{line}
A line graph is a useful tool which allows us to transform a time-varying graph into a static graph that preserves the original reachability information. Readers may temporarily skip the details and revisit this section when necessary.

The transformation uses a similar idea to the classical \emph{Line Graph} \cite{line_ref} which illustrates the adjacency between edges. The difference here is that we also need to consider the temporal features of time-varying graphs.
Given a time-varying graph $\mathcal{G}$ with source $s$ and destination $d$, its Line Graph $L(\mathcal{G})$ is constructed as follows.

\begin{itemize}[leftmargin=0.3cm,itemsep=1mm,topsep=1mm]
\item {For each contact $(e,t)$ in the original time-varying graph $\mathcal{G}$, create a corresponding node in the Line Graph; the new node is denoted by $v_{e,t}$. In addition, create a node for the source $s$ and a node for the destination $d$, respectively.}

\item {Add a directed edge from node $v_{e_1,t_1}$ to node $v_{e_2,t_2}$ in the Line Graph if $(e_1,t_1)\rightarrow(e_2,t_2)$ is a feasible journey from $\mathsf{start}(e_1)$ to $\mathsf{end}(e_2)$. Also, add an edge from node $s$ to node $v_{e,t}$ if $\mathsf{start}(e)=s$, and add an edge from node $v_{e,t}$ to node $d$ if $\mathsf{end}(e)=d$.}
\end{itemize}

\noindent An example of the Line Graph is shown in Figure \ref{line_example}. The Line Graph is useful in the sense that it preserves the information of every $s$-$d$ journey in the original time-varying graph. In Figure \ref{line_example}, we can observe the correspondence between journey $\mathsf{(AB,1)}$ $\mathsf{\rightarrow (BC,2)}$ $\mathsf{\rightarrow (CD,3)}$  and path $\mathsf{A\rightarrow V_{AB,1}}$ $\mathsf{\rightarrow V_{BC,2} \rightarrow V_{CD,3}\rightarrow D}$. This is generalized in Observation \ref{line_corre} whose correctness is easy to verify.
\begin{observation}\label{line_corre}
Every $s$-$d$ journey in a time-varying graph has an one-to-one correspondence to some $s$-$d$ path in its Line Graph.
\end{observation}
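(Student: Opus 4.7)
The plan is to exhibit an explicit bijection between the set of $s$-$d$ journeys in $\mathcal{G}$ and the set of (simple) directed $s$-$d$ paths in $L(\mathcal{G})$, and then verify that both directions of the map respect the defining conditions.

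First, I would define the forward map. Given a journey $(e_1,t_1)\to(e_2,t_2)\to\cdots\to(e_n,t_n)$ from $s$ to $d$, associate to it the sequence $s \to v_{e_1,t_1} \to v_{e_2,t_2} \to \cdots \to v_{e_n,t_n} \to d$ in $L(\mathcal{G})$. I would then check, one edge at a time, that every arc in this sequence actually exists in $L(\mathcal{G})$. The arc $s\to v_{e_1,t_1}$ exists by the construction of $L(\mathcal{G})$ together with journey condition (i) which says $\mathsf{start}(e_1)=s$; the arc $v_{e_n,t_n}\to d$ exists analogously from $\mathsf{end}(e_n)=d$. For an intermediate arc $v_{e_i,t_i}\to v_{e_{i+1},t_{i+1}}$, I need to verify that $(e_i,t_i)\to(e_{i+1},t_{i+1})$ is itself a two-contact journey from $\mathsf{start}(e_i)$ to $\mathsf{end}(e_{i+1})$; this is immediate from conditions (ii), (iii), and (iv) of the original journey, which hand me spatial adjacency, activity of both contacts, and the strict time order $t_{i+1}>t_i\le T$. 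Hence the sequence is a valid walk in $L(\mathcal{G})$. I would also remark that the nodes $v_{e_i,t_i}$ are pairwise distinct because the pairs $(e_i,t_i)$ are distinct in a journey (the times $t_i$ are strictly increasing), so the walk is in fact a simple path.

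Next, I would define the backward map. Given any simple $s$-$d$ path $P$ in $L(\mathcal{G})$, strip off the endpoints to obtain an ordered sequence of contact-nodes $v_{e_1,t_1},\ldots,v_{e_n,t_n}$, and return the sequence of contacts $(e_1,t_1)\to\cdots\to(e_n,t_n)$. To check this is a journey, I would read off the journey conditions directly from the definition of arcs in $L(\mathcal{G})$: the arc $s\to v_{e_1,t_1}$ forces $\mathsf{start}(e_1)=s$; the arc $v_{e_n,t_n}\to d$ forces $\mathsf{end}(e_n)=d$, giving (i); each arc $v_{e_i,t_i}\to v_{e_{i+1},t_{i+1}}$ was included precisely because $(e_i,t_i)\to(e_{i+1},t_{i+1})$ is a feasible journey, which delivers $\mathsf{end}(e_i)=\mathsf{start}(e_{i+1})$, $\rho(e_i,t_i)=\rho(e_{i+1},t_{i+1})=1$, and $t_{i+1}>t_i$, i.e., conditions (ii)--(iv); and $t_n\le T$ holds because all contacts only exist for $t\in\mathcal{T}$.

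Finally, I would observe that the two maps are inverse to each other: the forward map followed by the backward map recovers the original sequence of contacts verbatim, since contact-nodes of $L(\mathcal{G})$ are uniquely labeled by the pair $(e,t)$, and likewise in the other composition. This yields the claimed one-to-one correspondence. I do not expect a substantive obstacle here; the proof is really a bookkeeping exercise matching the four journey conditions against the two rules used to build $L(\mathcal{G})$, and the only subtlety worth flagging is confirming that the resulting walk in $L(\mathcal{G})$ is simple (which follows from strict monotonicity of the $t_i$), so that ``path'' in the statement can be taken in the usual simple-path sense.
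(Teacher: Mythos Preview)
Your proposal is correct and is exactly the explicit verification the paper alludes to but does not write out; the paper simply states that the observation's ``correctness is easy to verify'' and gives no further argument. Your forward and backward maps, together with the check that strictly increasing times force simplicity of the resulting walk, constitute the routine bookkeeping the paper leaves to the reader.
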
\vspace{-5mm}



\begin{figure}[ht]
\begin{center}
\includegraphics[width=3.2in]{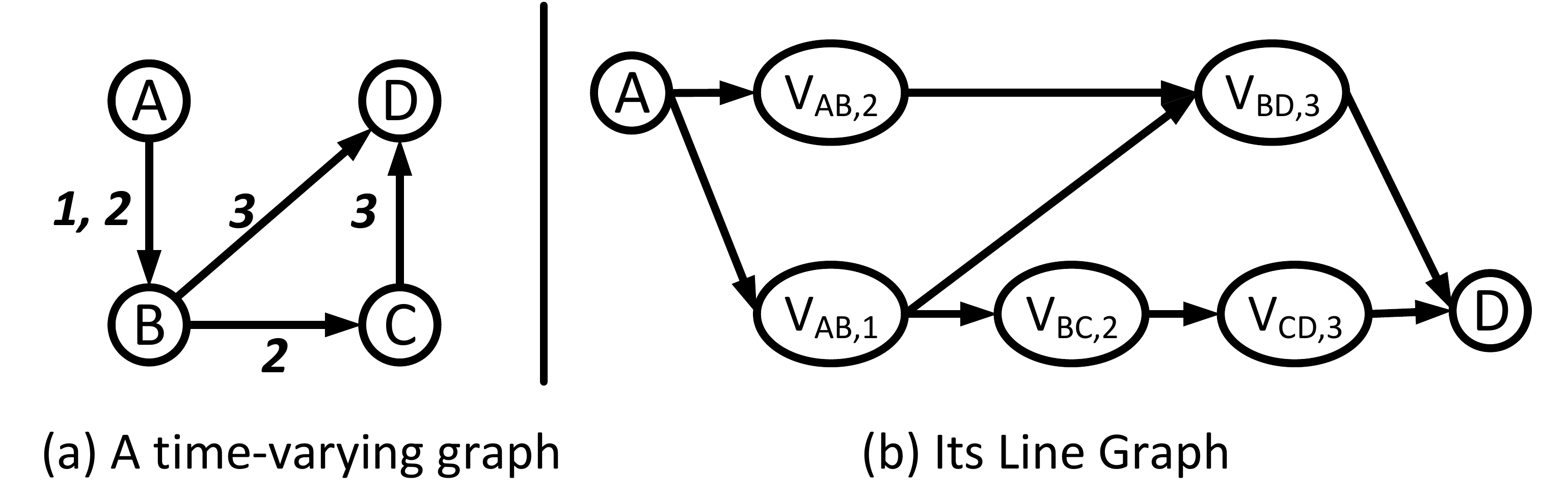}
\caption{Illustration of Line Graph (src: $\mathsf{A}$, dst: $\mathsf{D}$). }
\label{line_example}
\end{center}\vspace{-3mm}
\end{figure}

\section{Survivability Model and Metric}\label{survive_model}
In this section, we begin to investigate the survivability properties of time-varying networks. Specifically, we are interested in their resilience against \emph{unpredictable interruptions} (i.e., failures) such as unexpected shadowing, hardware malfunctions, etc. 

We first develop a new survivability model for time-varying networks. Next, several metrics are introduced to evaluate  survivability under the new model. Finally, we  present some graph-theoretic results regarding these metrics, which highlights the key difference between time-varying and static networks. In particular, we will show that some fundamental survivability-related
results in static networks, such as Menger's Theorem, only conditionally hold in time-varying networks. Such a difference makes it challenging to evaluate survivability in a time-varying network.
\subsection{$\mathlarger{\mathlarger{(n,\delta)}}$-Survivability}
In static networks, the \emph{worst-case} survivability is usually defined to be the ability to survive a certain number of failures wherever these failures occur. This definition is still feasible but very ineffective in time-varying networks because it fails to capture many temporal features of failures (e.g., duration and instant of occurrence). As discussed in the introduction, these temporal features  have significant impacts on time-varying networks. Hence, we extend the survivability model in order to account for these temporal effects and propose the concept of ${(n,\delta)}$-Survivability.
We first define $(n,\delta)$-survivability \emph{for a given source-destination pair}, i.e., pairwise $(n,\delta)$-survivability.

\begin{definition}[{Pairwise ${(n,\delta)}$-Survivability}]\label{local_def}
In a time-varying graph $\mathcal{G}$, a source-destination pair $(s,d)$ is $(n,\delta)$-survivable if $d$ is still reachable from $s$ after the occurrence of any $n$ failures, with each failure lasting for at most $\delta$ slots.
\end{definition}

\noindent We can further define global $(n,\delta)$-survivability.

\begin{definition}[{Global ${(n,\delta)}$-Survivability}] A time-varying network is $(n,\delta)$-survivable if all pairs of nodes are $(n,\delta)$-survivable.
\end{definition}

\noindent Since it only takes $O(|V|^2)$ to check all pairs of nodes, global $(n,\delta)$-survivability can be easily derived from pairwise $(n,\delta)$-survivability. Therefore, we will focus on pairwise $(n,\delta)$-survivability for a given pair of nodes $(s,d)$ throughout the rest of this paper.

\vspace{1mm}

\noindent \textbf{Discussion:} The above definitions do not impose any assumption about \emph{when} and \emph{where} the $n$ failures occur and thus imply the \emph{worst-case} survivability. In other words, $(n,\delta)$-survivability means the network can survive $n$ failures that last for $\delta$ slots \emph{wherever} and \emph{whenever} these failures occur. The parameter $n$ reflects ``spatial survivability", indicating \emph{how many} failures the network can survive, and the parameter $\delta$ reflects ``temporal survivability", indicating \emph{how long} these failures can last.

Note also that $(n,\delta)$-survivability is a generalized definition. For example, if $\delta$ = $T$ (note that $T$ is the time horizon), then $(n,\delta)$-survivability reflects the number of \emph{permanent failures} the network can tolerate, which becomes the conventional notion of survivability used in static networks.

Finally, it should be mentioned that failures can be either link failures or node failures. Since node failures can be converted to link failures by node splitting (see \cite{LP}, Chapter 7.2), we will consider link failures unless otherwise stated.
\subsection{Survivability Metrics}\label{intro_metrics}
In static networks, two commonly-used survivability metrics are: MinCut, i.e., the minimum number of edges whose deletion can separate the source and the destination, and MaxFlow, i.e., the maximum number of edge-disjoint paths from the source to the destination.  If MinCut (or MaxFlow) equals to $n$, the destination is still connected to the source after any $n-1$ link failures. However, by its very nature, a time-varying network has different topologies at different instants, so its connectivity or survivability must be measured over a long time interval and these static metrics cannot be directly applied to time-varying networks. In this section, we introduce two new metrics for $(n,\delta)$-survivability. The fundamental relationship between the two metrics will be further discussed in Section \ref{ana_metrics}.

\vspace{1mm}

\noindent \emph{\textbf{1) Survivability Metric: $\mathlarger{\mathsf{MinCut_{\delta}}}$}}

\vspace{1mm}

Before we proceed to the first survivability metric, it is necessary to introduce the notions of $\delta$-removal and $\delta$-cut.
\begin{definition}[{$\delta$-removal}]
A $\delta$-removal is the deletion of a link for $\delta$ consecutive time slots.
\end{definition}

\noindent Intuitively, a $\delta$-removal just corresponds to a link failure that lasts for $\delta$ consecutive time slots. 

\begin{definition}[{$\delta$-cut}]
A $\delta$-cut  is a set of $\delta$-removals that can render the destination unreachable from the source.
\end{definition}

\noindent  The above definition is similar to the traditional notion of graph cuts except that $\delta$-cuts also account for the duration of removals.

Now we are ready to introduce the first metric for $(n,\delta)$-survivability, namely $\mathsf{MinCut_{\delta}}$. This metric directly follows from the definition of $(n,\delta)$-survivability and is analogous to MinCut in static networks.

\begin{definition}[\textbf{${\mathsf{MinCut_{\delta}}}$}]\label{mincut_def}
$\mathsf{MinCut_{\delta}}$ is the cardinality of the smallest $\delta$-cut, i.e., the minimum number of $\delta$-removals needed to render the destination unreachable from the source.
\end{definition}

\noindent \textbf{Discussion.} First, $\mathsf{MinCut_{\delta}}$ gives the minimum number of $\delta$-removals required to disconnect the time-varying network.
In particular, when $\mathsf{MinCut_{\delta}}=n$, the source-destination pair can safely survive any $n-1$ failures that last for $\delta$ slots and is thus $(n-1,\delta)$-survivable.
Second, $\mathsf{MinCut_{\delta}}$ generalizes MinCut in static networks since we can simply set $\delta$ = $T$ such that a $\delta$-removal becomes a permanent link removal.


\vspace{1mm}

\noindent \textbf{Formulation.} $\mathsf{MinCut_{\delta}}$ corresponds to the following Integer Linear Programming (ILP) problem:
\[
\begin{split}
\min~~~~~~~~&\sum_{(e,t)\in C} y_{e,t}\\
\text{s.t.}~~~~~~~~&\sum_{(e,t)\in R(\delta,J)}y_{e,t}\ge 1,~\forall J\in\mathcal{J}_{sd},\\
               &y_{e,t}\in \{0,1\},~\forall (e,t)\in C.
\end{split}
\]
Here, $y_{e,t}$ is a binary variable indicating whether a $\delta$-removal occurs to edge $e$ in slot $t$, and $C$ is the set of contacts in the time-varying graph. $\mathcal{J}_{sd}$ is the set of feasible journeys from $s$ to $d$. For any $J\in \mathcal{J}_{sd}$, we define $R(\delta,J)$ as the set of contacts $\{(e,t)\}$ such that if $y_{e,t}=1$ then journey $J$ will be disrupted, i.e., $R(\delta,J)=\{(e,t)| ~\exists (e,t')\in C_J~\text{s.t.}~0\le t'-t<\delta\}$, where $C_J$ is the set of contacts used by journey $J$. Thus, the first constraint in the above ILP forces every journey from $s$ to $d$ to be disrupted by at least one of the selected $\delta$-removals, such that $d$ is not reachable from $s$.

The above formulation is concise but has an exponential number of constraints because the number of possible journeys is exponential in the number of contacts. There also exists a compact ILP formulation which is less intuitive and omitted here for brevity. The complexity and the algorithm for solving the above ILP will be further discussed in Section \ref{com_mincut}.

\vspace{1mm}

\noindent \emph{\textbf{2) Survivability Metric: $\mathlarger{\mathsf{MaxFlow_{\delta}}}$}}

\vspace{1mm}

The second survivability metric, namely $\tt \mathsf{MaxFlow_{\delta}}$, is analogous to MaxFlow in static networks. Before the detailed definition of this metric, we first introduce the notion of $\delta$-disjoint journeys.

\begin{definition}[{$\delta$-disjoint Journey}]
A set of journeys from the source to the destination are $\delta$-disjoint if any two of these journeys do not use the same edge within $\delta$ time slots.
\end{definition}

\noindent Mathematically, suppose $\mathcal{J}$ is a set of $\delta$-disjoint journeys. For any two journeys $J_1,J_2\in \mathcal{J}$, if edge $e$ is used by $J_1$ in slot $t$, then $J_2$ cannot use the same edge $e$ from slot $t-\delta+1$ to slot $t+\delta-1$. In other words, sliding a window of $\delta$ slots over time, we can observe at most one active journey over each edge within the window. Figure \ref{disjoint_example} gives an example of $\delta$-disjoint journeys for the cases where $\delta=1$ and $\delta=2$. 

\begin{figure}[ht]
\begin{center}
\includegraphics[width=3.2in]{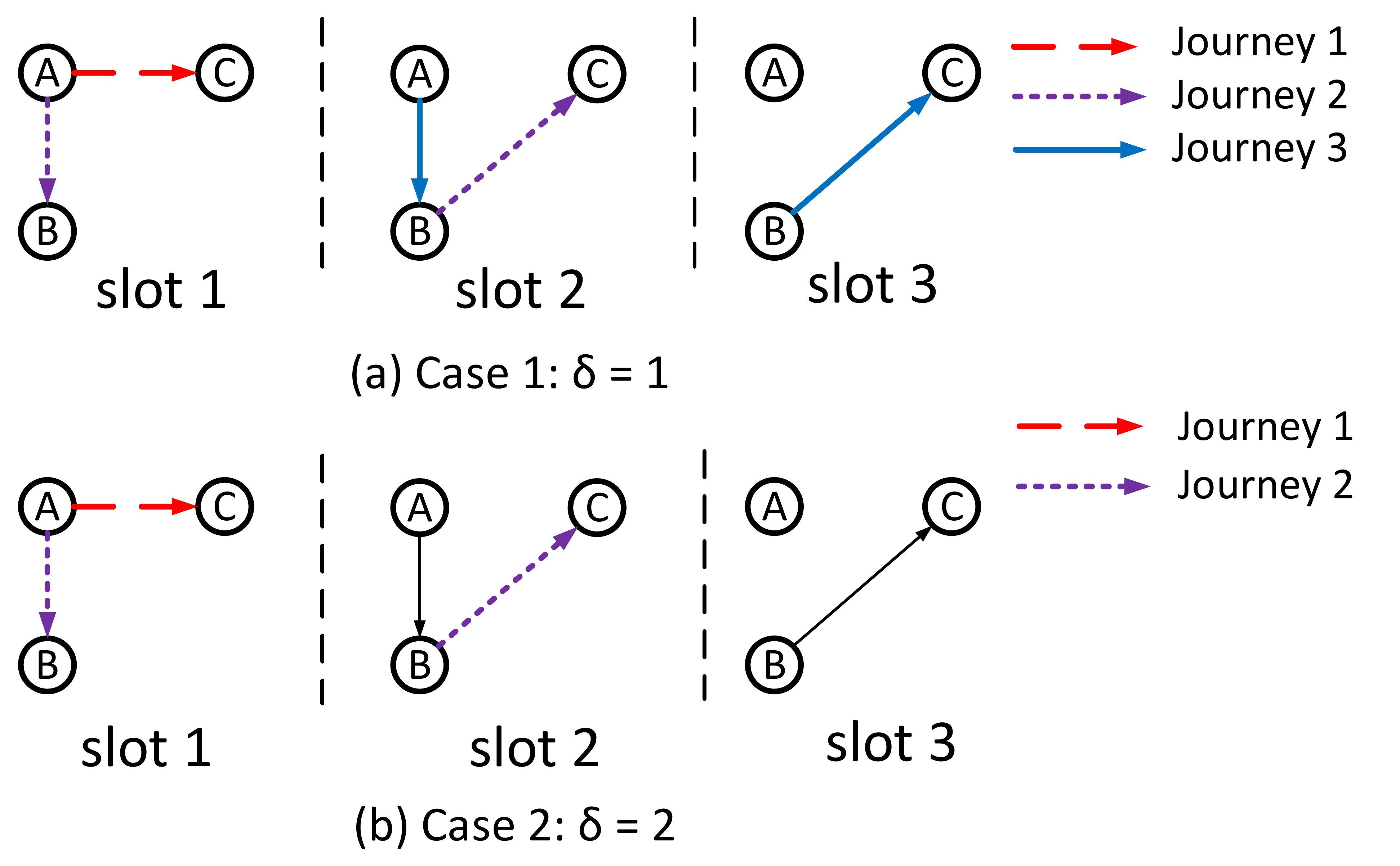}
\caption{Illustration of $\delta$-disjoint journeys. The source-destination pair is $\mathsf{(A,C)}$. (a) When $\delta=1$, any two different $\delta$-disjoint journeys cannot use the same link within the same slot, and there are three $\delta$-disjoint journeys. (b) When $\delta=2$, only two $\delta$-disjoint journeys exist since any link cannot be used by two $\delta$-disjoint journeys within 2 slots. For example, link $\mathsf{A\rightarrow B}$ has been used by Journey 2 in slot 1, so any other $\delta$-disjoint journey cannot use this link in slot 1 or 2.}
\label{disjoint_example}\vspace{-3mm}
\end{center}
\end{figure}

It is easy to see that each one of the $\delta$-disjoint journeys keeps a ``temporal distance" of $\delta$ slots from others. Due to the temporal distance, any failure that lasts for $\delta$ slots can influence at most one of these $\delta$-disjoint journeys. Consequently, the maximum number of $\delta$-disjoint journeys in a time-varying network is a good indicator of its survivability. The more $\delta$-disjoint journeys there exist, the more failures (lasting for $\delta$ slots) the network can survive. Now it is natural to introduce the second survivability metric $\mathsf{MaxFlow_{\delta}}$.

\begin{definition}[{$\mathsf{MaxFlow_{\delta}}$}]
$\mathsf{MaxFlow_{\delta}}$ is the maximum number of $\delta$-disjoint journeys from the source to the destination.
\end{definition}

\noindent \textbf{Discussion.} First, we would like to compare MaxFlow (for static networks) and $\mathsf{MaxFlow_{\delta}}$ (for time-varying networks). MaxFlow considers disjoint paths which require \emph{spatial disjointness}, i.e., any two disjoint paths never use the same link. This requirement is too demanding for time-varying networks because such networks often have sparse spatial connectivity. In the example of bus communication networks (see Section \ref{app1}), we will see that a time-varying network may not have any spatially-disjoint paths. Thus, MaxFlow is not an appropriate metric for time-varying networks. By comparison, $\mathsf{MaxFlow_{\delta}}$ considers $\delta$-disjoint journeys, which allows for \emph{temporal disjointness}.  Moreover, $\mathsf{MaxFlow_{\delta}}$  generalizes MaxFlow since we can simply set $\delta$ = $T$ so that $\delta$-disjoint journeys become spatially disjoint.

Second, $\mathsf{MaxFlow_{\delta}}$ not only gives us a measure of network survivability but also tells us how to achieve such survivability. The idea is similar to Disjoint-Path Protection in static networks \cite{dis_protect1}\cite{dis_protect2}, where disjoint paths are used as backup routes. In time-varying networks, we can send packets along different $\delta$-disjoint journeys  to increase transmission reliability. If we use $n$ $\delta$-disjoint journeys (i.e., $\mathsf{MaxFlow}_{\delta}\ge n$), the transmission can survive any $n-1$ failures that last for $\delta$ slots and is thus $(n-1,\delta)$-survivable.

\vspace{1mm}

\noindent \textbf{Formulation.} $\mathsf{MaxFlow_{\delta}}$ corresponds to the following ILP:
\[
\begin{split}
\max~~~~~~~~&\sum_{J\in \mathcal{J}_{sd}}x_{J}\\
\text{s.t.}~~~~~~~~&\sum_{J: (e,t)\in R(\delta,J)}x_{J}\le 1,~\forall (e,t)\in C\\
               &x_{J}\in \{0,1\},~\forall J\in \mathcal{J}_{sd}.
\end{split}
\]
Here, $x_J$ is a binary variable indicating whether journey $J$ should be added to the set of $\delta$-disjoint journeys. All the other notations have the same meanings as in the formulation of $\mathsf{MinCut_{\delta}}$. The first constraint checks every edge and forces this edge to be used by at most one of the $\delta$-disjoint journeys in any time window of $\delta$ slots. The above formulation also has an exponential number of constraints. A compact formulation also exists but is omitted for brevity. The complexity and the algorithms for solving the above ILP will be further investigated in Section \ref{com_maxflow}.

\subsection{Analysis of Metrics}\label{ana_metrics}
Recall that in static networks, the well-known Menger's Theorem shows that MinCut equals to MaxFlow; due to this equivalence, we can compute MaxFlow and MinCut efficiently (e.g., the Ford-Fulkerson algorithm). Hence, it is necessary to study the fundamental relationship between $\mathsf{MinCut_{\delta}}$ and $\mathsf{MaxFlow_{\delta}}$, in order to gain insights into their computation. Let $\mathsf{MinCut^R_{\delta}}$ and $\mathsf{MaxFlow^R_{\delta}}$ be the LP relaxation for the ILP formulation of $\mathsf{MinCut_{\delta}}$ and $\mathsf{MaxFlow_{\delta}}$, respectively. It is easy to show that $\mathsf{MinCut^R_{\delta}}$  is the \emph{dual problem} of $\mathsf{MaxFlow^R_{\delta}}$. By strong duality and the properties of LP relaxation, we make the following observation:
\vspace{-2mm}
\[
\mathsf{MaxFlow_{\delta}\le MaxFlow^{R}_{\delta}=MinCut^{R}_{\delta}\le MinCut_{\delta}}.\vspace{-2mm}
\]
As a result, as long as Menger's Theorem holds in time-varying networks (i.e., $\mathsf{MaxFlow_{\delta}= MinCut_{\delta}}$), all of the four quantities will be equivalent, and we can simply compute $\mathsf{MaxFlow_{\delta}}$ and $\mathsf{MinCut_{\delta}}$ by solving their LP relaxations. Interestingly, the following theorem shows that Menger's Theorem only ``conditionally" holds in time-varying networks.

\begin{theorem}\label{delta}
Time-varying graphs have the following survivability properties:\\
(I) If $\delta=1$, then Menger's Theorem holds for any time-varying graph, i.e., $\mathsf{MaxFlow_{1}= MinCut_{1}}$.\\
(II) For any $\delta\ge 2$, there exist instances of time-varying graphs such that $\mathsf{MaxFlow_{\delta}< {MinCut_{\delta}}}$. Moreover, the gap ratio $\frac{\mathsf{MinCut_{\delta}}}{\mathsf{MaxFlow_{\delta}}}$ can grow without bound.
\end{theorem}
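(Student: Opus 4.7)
For Part (I), the plan is to exploit the Line Graph construction from Section \ref{line}. When $\delta = 1$, a $1$-removal on contact $(e,t)$ corresponds precisely to deleting the internal node $v_{e,t}$ in $L(\mathcal{G})$, and by Observation \ref{line_corre} two $s$-$d$ journeys are $1$-disjoint if and only if their associated $s$-$d$ paths in $L(\mathcal{G})$ are internally node-disjoint (two journeys share a contact exactly when their Line Graph paths share the corresponding internal vertex). Hence $\mathsf{MinCut_{1}}$ equals the minimum internal $s$-$d$ node cut of $L(\mathcal{G})$ and $\mathsf{MaxFlow_{1}}$ equals the maximum number of internally node-disjoint $s$-$d$ paths of $L(\mathcal{G})$, and the classical vertex-form Menger Theorem applied to the static digraph $L(\mathcal{G})$ delivers the equality.

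For Part (II), the plan is to construct, for every $\delta \geq 2$ and every integer $k \geq 2$, a time-varying graph whose conflict graph among $s$-$d$ journeys is exactly the complete graph $K_k$. The construction will exhibit $k$ journeys $J_1, \ldots, J_k$, with each pair $(J_i, J_j)$ sharing one dedicated edge $g_{ij}$ that $J_i$ and $J_j$ traverse at two slots differing by at most $\delta - 1$, while all other edges used by a journey are private to that journey. Because the conflict graph is $K_k$ and its independence number is $1$, no two journeys are $\delta$-disjoint, so $\mathsf{MaxFlow_{\delta}} = 1$. On the other hand, each $\delta$-removal can annihilate at most one of the edges $g_{ij}$ and hence simultaneously disable at most two of the $k$ journeys; consequently any set of $\delta$-removals that disconnects $s$ from $d$ must, when projected onto the shared edges, form an edge cover of $K_k$, yielding $\mathsf{MinCut_{\delta}} \geq \lceil k/2 \rceil$. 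The ratio $\mathsf{MinCut_{\delta}} / \mathsf{MaxFlow_{\delta}}$ is thus at least $\lceil k/2 \rceil$, which is unbounded in $k$.

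The hard part will be the construction in Part (II), and in particular ensuring that the only $s$-$d$ journeys are the $k$ intended ones; otherwise a ``mixed'' journey splicing segments of different $J_i$'s could provide extra independent-set members and inflate $\mathsf{MaxFlow_{\delta}}$. I plan to place the $\binom{k}{2}$ shared edges in pairwise disjoint time windows, chain the $k-1$ shared visits along each $J_i$ with private single-slot transit edges, and node-split the endpoints of every $g_{ij}$ with tightly time-gated split-edges so that a packet arriving at $b_{ij}$ on $J_i$'s schedule cannot idle there to board $J_j$'s later transit. A worked $k = 3$, $\delta = 2$ example already establishes the strict inequality $\mathsf{MinCut_{\delta}} > \mathsf{MaxFlow_{\delta}}$, and the same template scales to arbitrary $k$ once $T$ is taken large enough to hold all pair-windows, which yields the claimed unbounded gap.
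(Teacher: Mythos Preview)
Your Part (I) is exactly the paper's argument: both reduce to the node-version of Menger's Theorem on the Line Graph $L(\mathcal{G})$, using that $1$-disjoint journeys correspond to internally node-disjoint $s$-$d$ paths and that $1$-removals correspond to internal node deletions.

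Your Part (II) takes a genuinely different route. The paper builds a recursive family $\{\mathcal{G}_k\}$ level by level, with a carefully placed ``shortcut edge'' at each level so that (by induction) $\mathsf{MaxFlow_\delta}=1$ while $\mathsf{MinCut_\delta}=k$; the recursive structure makes the enumeration of all $s$-$d$ journeys tractable. Your idea instead engineers the journey conflict graph to be $K_k$ via $\binom{k}{2}$ dedicated shared edges, then reads off $\mathsf{MaxFlow_\delta}=1$ from the independence number and $\mathsf{MinCut_\delta}\ge\lceil k/2\rceil$ from an edge-cover bound. This is an appealing and more combinatorial viewpoint, and your $\mathsf{MinCut_\delta}$ lower bound is sound regardless of how many journeys exist: each $\delta$-removal still disables at most two of the canonical $J_i$'s, and any $\delta$-cut must disable all $k$ of them.

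There is, however, a real gap in the mechanism you propose for $\mathsf{MaxFlow_\delta}=1$. In the paper's model a journey may wait at a node (condition (iv) is $t_{i+1}>t_i$, not $t_{i+1}=t_i+1$), so time-gating a split-edge at $b_{ij}$ does \emph{not} prevent a packet that arrived on $J_i$'s schedule from idling and then crossing at $J_j$'s later activation time. Since the split-edge must be active for both $J_i$ and $J_j$, idling-then-switching is unavoidable at every shared endpoint in at least one direction. Consequently, mixed journeys do exist in your gadget, and you have not argued that every pair of such mixed journeys still shares some $g_{ij}$ within $\delta$ slots. Without that, $\mathsf{MaxFlow_\delta}=1$ is unproven; with the wrong ordering of pair-windows one can in fact splice two journeys that avoid each other's shared edges. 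To rescue the approach you would need either a structural argument that any two $s$-$d$ walks in your gadget necessarily share some $g_{ij}$ (for a specific, carefully chosen ordering of the pair-windows along each $J_i$), or a weaker bound $\mathsf{MaxFlow_\delta}=o(k)$, either of which still yields the unbounded ratio. The paper's recursive construction sidesteps this difficulty entirely because each level adds exactly one new way to reach the new destination, so the set of all journeys is explicitly controlled.
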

\begin{proof}
See Appendix \ref{proof-delta}.
\end{proof}
Theorem \ref{delta} shows that Menger's Theorem could break down in time-varying graphs, which highlights a key difference between time-varying and static graphs. Due to this fundamental difference, the traditional techniques used to compute MaxFlow or MinCut in static networks, such as the Ford-Fulkerson algorithm, cannot be applied to time-varying graphs to compute $\mathsf{MaxFlow_{\delta}}$ or $\mathsf{MinCut_{\delta}}$. In the next section, we will further discuss the computation of the two metrics.

\section{Computational Issues}\label{comp}
In this section, we study the computational complexity and related algorithms for computing $\mathsf{MaxFlow_{\delta}}$ and $\mathsf{MinCut_{\delta}}$ in time-varying networks. 
\subsection{Computation of { $\mathlarger{\mathsf{MaxFlow_{\delta}}}$}}\label{com_maxflow}
We start with the computation of  $\mathsf{MaxFlow_{\delta}}$ for an \emph{arbitrary value} of $\delta$, referred to as the $\boldsymbol{\delta}$-\textbf{MAXFLOW} problem.
The following theorem shows that this problem is even NP-hard to approximate.
\begin{theorem}\label{max_hardness}
$\delta$-MAXFLOW is NP-hard. It is even NP-hard to achieve $O(\sqrt{|E|})$-approximation, and this bound is tight.
\end{theorem}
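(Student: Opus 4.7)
The plan is to prove the three claims of the theorem, i.e., NP-hardness, the $\Omega(\sqrt{|E|})$ inapproximability lower bound, and the matching $O(\sqrt{|E|})$ upper bound, by pairing a gap-preserving reduction from directed Edge-Disjoint Paths (EDP) with a bounded-length greedy algorithm on the Line Graph.

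For NP-hardness and the inapproximability lower bound, I would reduce from directed EDP, for which Guruswami, Khanna, Rajaraman, Shepherd, and Yannakakis proved NP-hardness of $O(m^{1/2-\epsilon})$-approximation. Given an EDP instance $(G=(V,E),s,d)$, construct a time-varying graph $\mathcal{G}$ whose underlying graph is exactly $G$, whose horizon is $T=|V|$, with $\rho\equiv 1$, and set $\delta=T$. The correspondence to verify is that a collection of $k$ edge-disjoint $s$-$d$ paths in $G$ (each of length at most $|V|-1=T$) lifts to $k$ $\delta$-disjoint journeys in $\mathcal{G}$ by assigning the $i$-th edge of a path to slot $i$, while conversely every $\delta$-disjoint family of journeys projects to edge-disjoint walks in $G$, which decompose into edge-disjoint simple paths by cycle removal. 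Edge-disjointness on the projection side follows because $\delta=T$ forces any two contacts on the same static edge to be at temporal distance strictly less than $\delta$. Since the underlying graph is unchanged, $|E|$ is identical in both instances and the EDP inapproximability transfers.

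For the matching $O(\sqrt{|E|})$-approximation I would run a bounded-length shortest-journey greedy on the Line Graph $L(\mathcal{G})$ from Section~\ref{line}. Iteratively: find a shortest available $s$-$d$ path in $L(\mathcal{G})$, which by Observation~\ref{line_corre} is an $s$-$d$ journey; add it to the solution; and for every contact $v_{e,t}$ on that path, delete from $L(\mathcal{G})$ all contacts $v_{e,t'}$ with $|t'-t|<\delta$ to enforce $\delta$-disjointness. The analysis follows Kleinberg's dichotomy for EDP. Fix a length threshold $L=\Theta(\sqrt{|E|})$. Either at least half of the optimum journeys have length at most $L$, in which case each greedy step conflicts with at most $O(L)$ optimal journeys and the greedy recovers an $\Omega(\mathsf{OPT}/L)$ fraction; or most optimal journeys traverse more than $L$ distinct static edges, in which case a counting argument---each static edge is used at most $\lceil T/\delta\rceil$ times across a $\delta$-disjoint family, so the total static-edge usage is $O(|E|T/\delta)$---forces $\mathsf{OPT}\le O(\sqrt{|E|})$ directly.

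The most delicate step will be the second horn of the dichotomy, where the $\delta$-disjointness constraint must be converted into an edge-counting bound that matches the desired $\sqrt{|E|}$ ratio. A journey may revisit the same static edge at different slots, so the naive ``length implies many distinct edges'' conversion is lossy, and the argument must either restrict attention to edge-simple journeys (showing such a restriction loses at most a constant factor through a short-cutting lemma) or charge repeated usage via the $\delta$-disjointness structure. Parallel care is needed in the reduction, so that $|E|$ of the constructed $\mathcal{G}$ exactly matches that of the EDP instance and the hardness aligns; this is why $T$ is chosen to be $|V|$ with every edge active on the full horizon, rather than time-expanding the graph.
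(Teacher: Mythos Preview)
Your upper-bound argument (the bounded-length greedy on the Line Graph with the Kleinberg-style short/long dichotomy) is essentially the paper's Algorithm~\ref{alg_maxflow1} and Theorem~\ref{ratio_maxflow}, and your analysis sketch is correct.

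The reduction, however, has a genuine gap. You reduce from \emph{single-commodity} directed EDP---a fixed pair $(s,d)$ and no length bound---but that problem is solvable in polynomial time: it is exactly integer $s$--$d$ max-flow with unit capacities. The $O(m^{1/2-\epsilon})$ inapproximability of Guruswami \emph{et al.}\ applies to the \emph{multi-commodity} EDP (many terminal pairs) and to the \emph{bounded-length} single-pair variant (BLEDP), not to the instance you wrote down. Concretely, with your construction ($T=|V|$, $\rho\equiv 1$, $\delta=T$) every simple $s$--$d$ path fits in the horizon, so $\mathsf{MaxFlow}_\delta$ equals the ordinary edge-connectivity of $G$ and is computable in polynomial time; your reduction therefore proves nothing about hardness.

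The fix the paper uses is to reduce from BLEDP: one is given lengths $l_e$ and a bound $L$, sets $T=L$ and $\delta=T$, and subdivides each edge into $l_e$ unit-delay arcs so that a journey completing within $T$ slots corresponds precisely to a path of total length at most $L$. The length bound is what carries the hardness; the time horizon encodes it. Because each $l_e\le L$ and $L$ is bounded, $|E|$ in the constructed graph is $\Theta(|E'|)$ and the $O(\sqrt{|E'|})$ inapproximability of BLEDP transfers. If you prefer to start from multi-commodity EDP instead, you would need an additional gadget to merge all terminal pairs into a single $(s,d)$ in the time-varying instance, which is less direct than the BLEDP route.
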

\begin{proof}
See Appendix \ref{proof_max_hardness}.
\end{proof}
\noindent Note that to prove the tightness of the inapproximability bound, we just need to find an algorithm that achieves $O(\sqrt{|E|})$-approximation, which will be demonstrated later.

Next, we propose an approximation algorithm that attains the approximation lower bound in Theorem \ref{max_hardness}. Before we move on to the detailed algorithm description, it is necessary to introduce a short-hand term called \emph{interfering contact}.

\begin{definition}[{Interfering Contact}]
Consider a journey $J$. A contact $(e,t)$ is said to be an interfering contact of journey $J$ if there exists a contact $(e,t')$ used by $J$ such that $|t-t'|<\delta$.
\end{definition}

\noindent If $J$ is one of the $\delta$-disjoint journeys, then its interfering contacts cannot be used by any other $\delta$-disjoint journey.

Now we are ready to present a greedy algorithm for $\delta$-MAXFLOW, shown as Algorithm \ref{alg_maxflow1}. It first computes the Line Graph (see Section \ref{line}) of the original time-varying graph and then finds an $s$-$d$ path with the least number of nodes 
 in the Line Graph. By the property of Line Graphs (see Observation \ref{line_corre} in Section \ref{line}) , this path corresponds to a journey in the original time-varying graph; then we add this journey to the set of $\delta$-disjoint journeys. The next operation is to remove all the interfering contacts of this journey from the time-varying graph and reconstruct the Line Graph from the \emph{remaining time-varying graph}. If $s$ and $d$ are still connected in the Line Graph, the above procedure is repeated until $s$ and $d$ are disconnected. From the definition of interfering contacts, we can easily  verify that the obtained journeys are $\delta$-disjoint.
\begin{algorithm}[ht]
 \caption{Greedy Algorithm for $\delta$-MAXFLOW}\label{alg_maxflow1}
    \begin{algorithmic}[1]
\REQUIRE ~~\\

$\mathcal{G}$: the time-varying graph;\\
$(s,d)$: the source-destination pair;\\
$\delta$: the degree of temporal disjointness;\\

\ENSURE ~~\\

$J_1,\cdots,J_m$: a set of $\delta$-disjoint journeys.

\vspace{1mm}

\STATE Initialize $m=0$;

\STATE Compute the Line Graph of $\mathcal{G}$;\label{line_step}
\IF{$s$ and $d$ is disconnected in the Line Graph}
\STATE Go to step \ref{end};
\ENDIF \label{a3}
\STATE $m\leftarrow m+1$;
\STATE In the Line Graph, find an $s-d$ path $P_m$ that passes the least number of nodes (the corresponding journey is denoted by $J_m$);\label{a1}
\STATE Remove all the interfering contacts of $J_m$ from $\mathcal{G}$;\label{a2}
\STATE Go to step \ref{line_step};
\STATE END.\label{end}
\end{algorithmic}
\end{algorithm}


Now we estimate the time complexity of this greedy algorithm. In each iteration (steps \ref{line_step}-\ref{a2}), we need to compute the Line Graph and the path with the least number of nodes. Recall that we denote $|C|$ the total number of contacts in the time-varying graph. Then it takes $O(|C|^2)$ time to construct the Line Graph and $O(|C|^2)$ time to compute the path with the least number of nodes (suppose BFS is used). Also note that the total number of iterations is at most $|C|$ since the number of $\delta$-disjoint journeys cannot exceed $|C|$ and each iteration adds one $\delta$-disjoint journey. Consequently, the overall time complexity of the greedy algorithm is $O(|C|^3)$.

The approximation ratio of this greedy algorithm is given in the following theorem.

\begin{theorem}\label{ratio_maxflow}
The greedy algorithm  attains $O(\sqrt{|E|})$ approximation for $\delta$-MAXFLOW, i.e., $\frac{\mathsf{OPT}}{\mathsf{ALG}}=O(\sqrt{|E|})$.
\end{theorem}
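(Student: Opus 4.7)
The plan is to adapt Kleinberg's classical $O(\sqrt{m})$ analysis of the greedy shortest-path algorithm for Edge-Disjoint Paths to the time-varying setting, with length threshold $L=\sqrt{|E|}$ and a case split on the length of the longest greedy journey.

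First I would establish two structural facts. \emph{Cut property at termination:} when greedy halts, the union of all interfering contacts it has removed must be an $s$-$d$ separating set in $\mathcal{G}$; otherwise an $s$-$d$ journey would still exist in the updated Line Graph and the loop would not have exited. In particular, every OPT journey must share an interfering contact with some greedy $J_i$. \emph{Blocking bound:} a greedy journey of length $\ell$ can block at most $O(\ell)$ OPT journeys, because each of its contacts $(e,t)$ has an interfering window $[t-\delta+1,t+\delta-1]$ on a single spatial edge $e$, and the $\delta$-disjointness of OPT forces at most a constant number of OPT journeys to use $e$ within that window.

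Next I would perform the case analysis on $\ell^{\max}$, the length of the longest journey produced by greedy (equivalently the last one, since greedy's choices have non-decreasing length). In \emph{Case A} ($\ell^{\max}\le \sqrt{|E|}$), the blocking bound gives $\mathsf{OPT}\le O(1)\sum_i \ell_i\le O(k\sqrt{|E|})=O(\sqrt{|E|})\cdot\mathsf{ALG}$, and the claim follows. In \emph{Case B} ($\ell^{\max}>\sqrt{|E|}$), let $j$ be the first iteration with $\ell_j>\sqrt{|E|}$. OPT splits into journeys blocked by greedy iterations before $j$ --- of which there are at most $O((j{-}1)\sqrt{|E|})=O(\mathsf{ALG}\cdot\sqrt{|E|})$ by the blocking bound --- and journeys still alive at the start of iteration $j$. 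The alive journeys are $\delta$-disjoint in the residual time-varying graph, and each has length exceeding $\sqrt{|E|}$ (since the residual graph's shortest journey is $\ell_j>\sqrt{|E|}$). A volume argument --- using that a shortest Line-Graph path can be taken to be edge-simple, so each alive journey occupies many distinct spatial edges of $G$, while $\delta$-disjointness bounds the multiplicity with which any given edge is shared across the alive journeys --- should then cap their number at $O(\sqrt{|E|})$, closing the case.

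The main obstacle is precisely this volume step in Case B. In the static EDP analysis, edge-disjointness directly bounds the number of long alive paths by $|E|/L=\sqrt{|E|}$. In the time-varying setting, the weaker $\delta$-disjointness allows a single spatial edge to be reused by distinct alive journeys at $\delta$-separated times, so a naive volume count degrades. The delicate work is to combine the residual-graph fact (every surviving journey is long) with the Line-Graph shortest-path structure to recover the $O(\sqrt{|E|})$ cap on alive journeys; once this is in place, the two regimes combine to yield the claimed approximation ratio.
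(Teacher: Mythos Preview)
Your skeleton is the right Kleinberg-style argument, and it is close to what the paper does. The paper organizes it a bit differently: rather than splitting on the greedy algorithm's longest journey, it fixes a threshold $k$, splits $\mathcal{J}^*$ directly into short ($\le k$ contacts) and long ($>k$ contacts) journeys, bounds the long ones by a volume count and the short ones by your blocking argument, and then optimizes $k$. In particular it shows that each short journey in $\mathcal{J}^*$ is first interfered with by a greedy journey that is itself short (since greedy picks a shortest candidate), which is essentially your Case~A reasoning folded into a single estimate. So the two decompositions are interchangeable.

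Where your proposal has a genuine gap is exactly the step you flag as ``delicate'': the volume bound on long (or, in your language, ``alive'') $\delta$-disjoint journeys. You correctly observe that $\delta$-disjointness lets distinct OPT journeys reuse the same spatial edge at $\delta$-separated times, so the naive EDP count fails. The fix the paper uses is simple and you should state it explicitly: on any fixed spatial edge $e$, the usage times across all journeys in $\mathcal{J}^*$ are pairwise $\ge\delta$ apart within $\{1,\dots,T\}$, so at most $\lceil T/\delta\rceil$ distinct OPT journeys touch $e$. Since each long journey (which can be taken node-simple, hence edge-simple) touches more than $k$ distinct spatial edges, the number of long OPT journeys is at most $\lceil T/\delta\rceil\,|E|/k$. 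Setting $k\approx\sqrt{|E|\,(T/\delta+1)}$ then balances the two terms. The final reduction to $O(\sqrt{|E|})$ relies on the paper's standing assumption that $T$ is a bounded integer, so the $T/\delta$ factor is absorbed into the constant; without invoking that assumption, neither your Case~B nor the paper's long-journey count yields $O(\sqrt{|E|})$. Once you insert this multiplicity bound and the bounded-$T$ assumption, your Case~B closes and your argument is complete.
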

\begin{proof}
See Appendix \ref{proof_ratio_maxflow}.
\end{proof}

\noindent Clearly, the above approximation ratio attains the lower bound in Theorem \ref{max_hardness}. As a result, the greedy algorithm is the \textbf{optimal approximation algorithm} that achieves the best approximation ratio, and the inapproximability bound in Theorem \ref{max_hardness} is tight. In practice, the greedy algorithm also performs extremely well, as is demonstrated by the following numerical results.

\vspace{1mm}

\noindent \textbf{Numerical Results for the Greedy Algorithm.} In order to understand the performance of the greedy algorithm, we compare it with the optimal solution to $\delta$-MAXFLOW. In our experiment, 1000 random time-varying graphs are tested. Each network has 20 nodes and the underlying static graph is a random scale-free graph. The time horizon is $T=20$ slots and we assume each link is active with a probability $p=0.5$ in each slot. The source-destination pair is also randomly selected. The optimal solution to $\delta$-MAXFLOW is derived by directly solving its ILP formulation. Figure \ref{gap_maxflow} shows the comparison, where the approximation gap is calculated by $\frac{\mathsf{OPT}-\mathsf{ALG}}{\mathsf{ALG}}$. We can observe that the approximation gap is usually less than 8\%, much better than the theoretical bound in Theorem \ref{ratio_maxflow}.

\begin{figure}[ht]
\begin{center}
\includegraphics[width=2.7in]{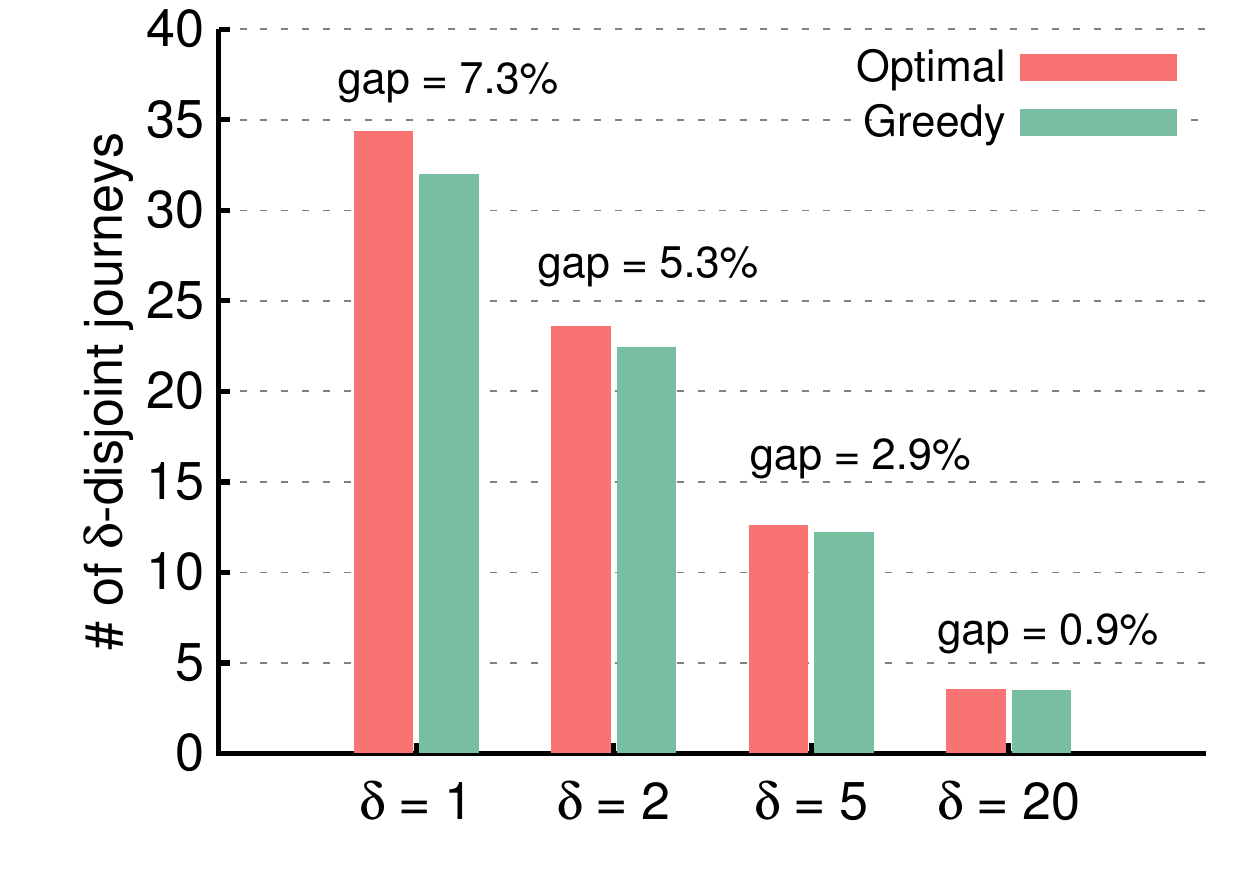}
\caption{Comparison between the greedy algorithm (Algorithm \ref{alg_maxflow1}) and the optimal solution to $\delta$-MAXFLOW.}
\label{gap_maxflow}\vspace{-3mm}
\end{center}
\end{figure}

\subsection{Computation of {$\mathlarger{\mathsf{MinCut_{\delta}}}$}}\label{com_mincut}
In this section, we study the computation of $\mathsf{MinCut_{\delta}}$ for an arbitrary value of $\delta$, referred to as the $\boldsymbol{\delta}$-\textbf{MINCUT} problem.

\noindent The complexity of $\delta$-MINCUT is given in Theorem \ref{mincut_hardness}.

\begin{theorem}\label{mincut_hardness}
$\delta$-MINCUT is NP-hard.
\end{theorem}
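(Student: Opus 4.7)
The plan is to reduce from the Length-Bounded $s$-$t$ Cut problem, which asks for a minimum edge set whose removal leaves no $s$-to-$t$ path of length at most $L$ and is classically known to be NP-hard for $L \ge 4$. Given an instance $(G = (V,E), s, t, L)$ of Length-Bounded Cut, I would construct a time-varying graph $\mathcal{G}$ whose underlying digraph is $G$, whose time horizon is $T = L$, and whose edge-presence function satisfies $\rho(e,t) = 1$ for every $e \in E$ and every $t \in \{1, \ldots, T\}$, with the usual unit edge delay. I would then query $\mathsf{MinCut_{\delta}}(\mathcal{G})$ at $\delta = T$, showing that this quantity equals the length-bounded cut value.

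The next step is to verify that $s$-to-$d$ journeys in $\mathcal{G}$ are exactly the walks in $G$ of length at most $L$. One direction is forced by the definition: journey times must be strictly increasing and bounded by $T = L$, so every journey has at most $L$ contacts and projects to a walk of length at most $L$ in $G$. For the converse, any walk $v_0, v_1, \ldots, v_\ell$ of length $\ell \le L$ in $G$ lifts to the journey $(v_0 v_1, 1) \to (v_1 v_2, 2) \to \cdots \to (v_{\ell-1} v_\ell, \ell)$, which is feasible because every edge is active in every slot. Since reachability by a walk of length at most $L$ and by a path of length at most $L$ coincide, length-bounded reachability in $G$ matches journey reachability in $\mathcal{G}$.

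I would then equate the two optima. Because $\delta = T$, a single $\delta$-removal of $e$ starting at slot $1$ deletes $e$ from the entire horizon, so any length-bounded cut $F$ of size $k$ in $G$ yields a $\delta$-cut of size $k$ in $\mathcal{G}$ by applying one $\delta$-removal per edge of $F$. Conversely, any $\delta$-cut of size $k$ projects onto an edge set $F$ of size at most $k$, and if $G - F$ still admitted a walk of length $\le L$, its lifted journey would avoid every affected contact, contradicting the fact that the removals form a cut. Hence $\mathsf{MinCut_{\delta}}(\mathcal{G})$ equals the optimal length-bounded cut value, which transfers NP-hardness.

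The main obstacle I anticipate is a small bookkeeping subtlety: $\delta$-removals in $\mathcal{G}$ are parameterized by both an edge and a starting slot, so a priori an optimum could waste removals on different starting slots of the same edge. For $\delta = T$ this is vacuous since one removal starting at slot $1$ already covers the entire horizon, but in a careful write-up one should argue (or observe without loss of generality) that an optimal $\delta$-cut uses each edge at most once. The reduction is clearly polynomial because $\mathcal{G}$ has only $|E| \cdot T$ contacts, so the classical NP-hardness of Length-Bounded Cut lifts directly to $\delta$-MINCUT.
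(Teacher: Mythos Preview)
Your reduction is correct, and it differs from the paper's argument. The paper does not build a fresh reduction: it simply cites the result of Kempe, Kleinberg and Kumar that, in temporal graphs where every edge is active in exactly one slot, deciding whether $k$ \emph{permanent node removals} suffice to separate $s$ from $d$ is NP-hard; it then observes that this is a restricted instance of the node-version of $\delta$-MINCUT and invokes node splitting to transfer hardness to the edge version.

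Your route, reducing from Length-Bounded $s$--$t$ Cut with $\delta=T=L$ on a time-varying graph whose edges are always on, is a genuinely different construction. It is more self-contained (it rests on a classical static-graph hardness result rather than a temporal-graph one) and it pleasingly mirrors the paper's own proof of Theorem~\ref{max_hardness}, which reduces $\delta$-MAXFLOW from Bounded-Length Edge-Disjoint Paths using essentially the same ``always-active, horizon $=L$'' gadget; thus the two hardness results become visible duals. The paper's approach, on the other hand, is shorter and also covers instances where edges are active in only one slot, so its hardness does not hinge on setting $\delta=T$. One small point worth tightening in your write-up: make explicit that $L$ may be taken at most $|V|-1$ (paths, not walks, are what length-bounded cut destroys), so the $|E|\cdot T$ contacts are polynomial in the input; and your ``each edge used at most once'' remark is really only needed in the forward direction, since in the backward direction you already argue via $|F|\le k$.
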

\begin{proof}
Kempe \emph{et al.} \cite{con} showed that in a special type of time-varying graphs, where each link is active for only one slot, it is NP-hard to determine whether there exists a set of $k$ nodes whose permanent removals can disconnect the source-destination pair. This is obviously a restricted instance of the node-version $\delta$-MINCUT problem, which implies that the node-version $\delta$-MINCUT is NP-hard. Moreover, it can be verified that node-version problems are just a special case of edge-version problems by using node splitting (see \cite{LP}, Chapter 7.2). Hence, the edge-version $\delta$-MINCUT problem is also NP-hard.
\end{proof}

Due to the computational intractability of $\delta$-MINCUT, we present an approximation algorithm  (referred to as the \emph{min-weight algorithm}) for $\delta$-MINCUT. The algorithm proceeds in three steps.

\vspace{1mm}

\noindent $\bullet$ \textbf{Step 1:} Assign a weight to each contact according to its ``temporal closeness" to other contacts. Intuitively, if there are more contacts in the ``temporal neighborhood" of the given contact, then a $\delta$-removal (i.e., a $\delta$-slot failure) of this contact will disable more neighboring contacts at the same time. Hence, this contact should be given a smaller weight such that it has a higher priority of being removed. We let the weight of a contact be inversely proportional to the number of its ``neighboring" contacts (see \textsc{SetWeight} in Algorithm \ref{heu}).

\vspace{1mm}

\noindent $\bullet$ \textbf{Step 2:} Compute $\mathsf{MinCut}_{1}$ over the weighted time-varying graph.  Note that Property (I) in Theorem \ref{delta} still holds in weighted time-varying graphs, so $\mathsf{MinCut}_1$ can be efficiently computed (e.g., by solving the LP relaxation). After this step, we obtain a set of contacts $S^*$ with the smallest sum of weights whose removals will disconnect the source-destination pair.

\vspace{1mm}

\noindent $\bullet$ \textbf{Step 3:} Compute the $\delta$-cover of $S^*$, i.e., the \emph{smallest} set of $\delta$-removals needed to cover all the contacts in $S^*$. For example, suppose $S^*=\{(e_1,1),(e_1,2),(e_2,2),(e_2,4)\}$ and $\delta=2$. Then we need at least three $\delta$-removals to cover all the contacts in $S^*$: one for $(e_1,1)$ and $(e_1,2)$, one for $(e_2,2)$ and one for $(e_2,4)$; this means that $|\mathsf{Cover_{\delta}}(S^*)|=3$. Finally, the $\delta$-cover of $S^*$ is returned as a feasible solution to $\delta$-MINCUT.

\begin{algorithm}[ht]
 \caption{Min-Weight Algorithm for $\delta$-MINCUT}\label{heu}
    \begin{algorithmic}[1]
\STATE Call \textsc{SetWeight} to compute the weight for each contact;%
\STATE Compute $\mathsf{MinCut}_1$ over the weighted time-varying graph, where we obtain a set of contacts $S^*$ with the smallest sum of weights whose removals will disconnect the source-destination pair;
\STATE Return the $\delta$-cover of $S^*$ as the solution.
\STATE \textbf{Procedure:} \textsc{SetWeight}
\FOR {each contact $(e,t)$}
\STATE Scan all the $\delta$-slot windows containing $(e,t)$, and find the one that contains the maximum number of contacts (say containing $K_{e,t}$ contacts);
\STATE Set $\omega_{e,t}=\frac{1}{K_{e,t}}$;
\ENDFOR
\end{algorithmic}
\end{algorithm}

The performance of the above min-weight algorithm is given in the following theorem.
\begin{theorem}\label{heu_bound}
The min-weight algorithm (Algorithm \ref{heu}) achieves $\delta$-approximation for $\delta$-MINCUT, i.e., $\frac{\mathsf{ALG}}{\mathsf{OPT}}\le \delta$.
\end{theorem}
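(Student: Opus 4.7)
The plan is to bound $\mathsf{ALG}$ by $\delta\cdot\mathsf{OPT}$ through the chain
\[
\mathsf{ALG} \;=\; |\mathsf{Cover}_\delta(S^*)| \;\le\; |S^*| \;\le\; \delta\cdot w(S^*) \;\le\; \delta\cdot \mathsf{OPT},
\]
where $w(\cdot)$ denotes the total weight under the assignment produced by \textsc{SetWeight} and $\mathsf{OPT}$ denotes the optimal value of $\delta$-MINCUT.

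The first inequality $|\mathsf{Cover}_\delta(S^*)| \le |S^*|$ is trivial because each contact in $S^*$ can on its own be covered by a single $\delta$-removal on that contact's edge and slot. The middle inequality $|S^*| \le \delta\cdot w(S^*)$ would follow from a uniform lower bound on the weights: any $\delta$-slot window contains at most $\delta$ contacts of a fixed edge $e$ (contacts sit at integer time slots under the unit-delay convention), so $K_{e,t}\le \delta$ and $\omega_{e,t}=1/K_{e,t}\ge 1/\delta$; summing over $S^*$ yields the claim.

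The crucial third inequality $w(S^*) \le \mathsf{OPT}$ is where I expect the real work. Let $\mathsf{OPT}$ also denote an optimal family of $\delta$-removals, and let $S_{\mathsf{OPT}}$ be the set of contacts these removals destroy. Since destroying $S_{\mathsf{OPT}}$ already disconnects $s$ from $d$, $S_{\mathsf{OPT}}$ is a feasible $1$-cut on the weighted graph, so by minimality of $S^*$ we have $w(S^*)\le w(S_{\mathsf{OPT}})$. To bound $w(S_{\mathsf{OPT}})$, I would fix any single $\delta$-removal in $\mathsf{OPT}$ acting on edge $e$ in a window $W=[t_0,t_0+\delta-1]$, and let $k$ be the number of contacts of $e$ destroyed by it. For each such destroyed contact $(e,t)$, $W$ itself is a $\delta$-slot window containing $(e,t)$ and having exactly $k$ contacts of $e$, so $K_{e,t}\ge k$ and $\omega_{e,t}\le 1/k$. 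The total weight contributed by the removal is thus at most $k\cdot(1/k)=1$; summing over removals in $\mathsf{OPT}$ gives $w(S_{\mathsf{OPT}})\le |\mathsf{OPT}|$, with any double counting of contacts destroyed by overlapping removals only making the summed upper bound larger than $w(S_{\mathsf{OPT}})$.

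The main obstacle is recognizing the right window ($W$ itself) in the third step: because \textsc{SetWeight} uses the \emph{maximum} over $\delta$-windows rather than the minimum, one must exhibit a specific window witnessing $K_{e,t}\ge k$, and the window of the $\delta$-removal is precisely what makes the per-removal weight budget of $1$ go through. Once the three bounds are assembled, chaining them immediately yields $\mathsf{ALG}\le \delta\cdot\mathsf{OPT}$, which is the claimed $\delta$-approximation ratio.
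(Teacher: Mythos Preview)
Your proposal is correct and follows essentially the same approach as the paper: both proofs rest on the two observations that $\omega_{e,t}\ge 1/\delta$ and that any single $\delta$-removal contributes total weight at most $1$, together with the optimality of $S^*$ as a weighted $1$-cut. The paper packages the bounds as a two-sided lemma $w(C)\le |\mathsf{Cover}_\delta(C)|\le \delta\,w(C)$ and applies it to $C_{ALG}$ and $C^*$; your route through the intermediate step $|\mathsf{Cover}_\delta(S^*)|\le |S^*|\le \delta\,w(S^*)$ is in fact a slightly simpler way to obtain the upper half of that lemma than the paper's more elaborate per-edge decomposition, but the content is the same.
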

\begin{proof}
See Appendix \ref{proof_heu_bound}.
\end{proof}

\noindent \textbf{Numerical Results for the Min-Weight Algorithm.} The simulation setting is the same as that used for Algorithm \ref{alg_maxflow1}. Figure \ref{gap_mincut} shows the comparison between the min-weight algorithm (Algorithm \ref{heu}) and the optimal solution to $\delta$-MINCUT.
We notice that the min-weight algorithm is close to the optimum: the approximation gap\footnote{The approximation gap is calculated by $\frac{\mathsf{ALG}-\mathsf{OPT}}{\mathsf{OPT}}$.} is less than 10\% for a relatively small value of $\delta$; in particular, the approximation gap is zero when $\delta=1$. The final observation is that the approximation gap becomes larger with the increase in $\delta$; this tendency is consistent with the theoretical approximation ratio of $\delta$.

\begin{figure}[ht]
\begin{center}
\includegraphics[width=2.7in]{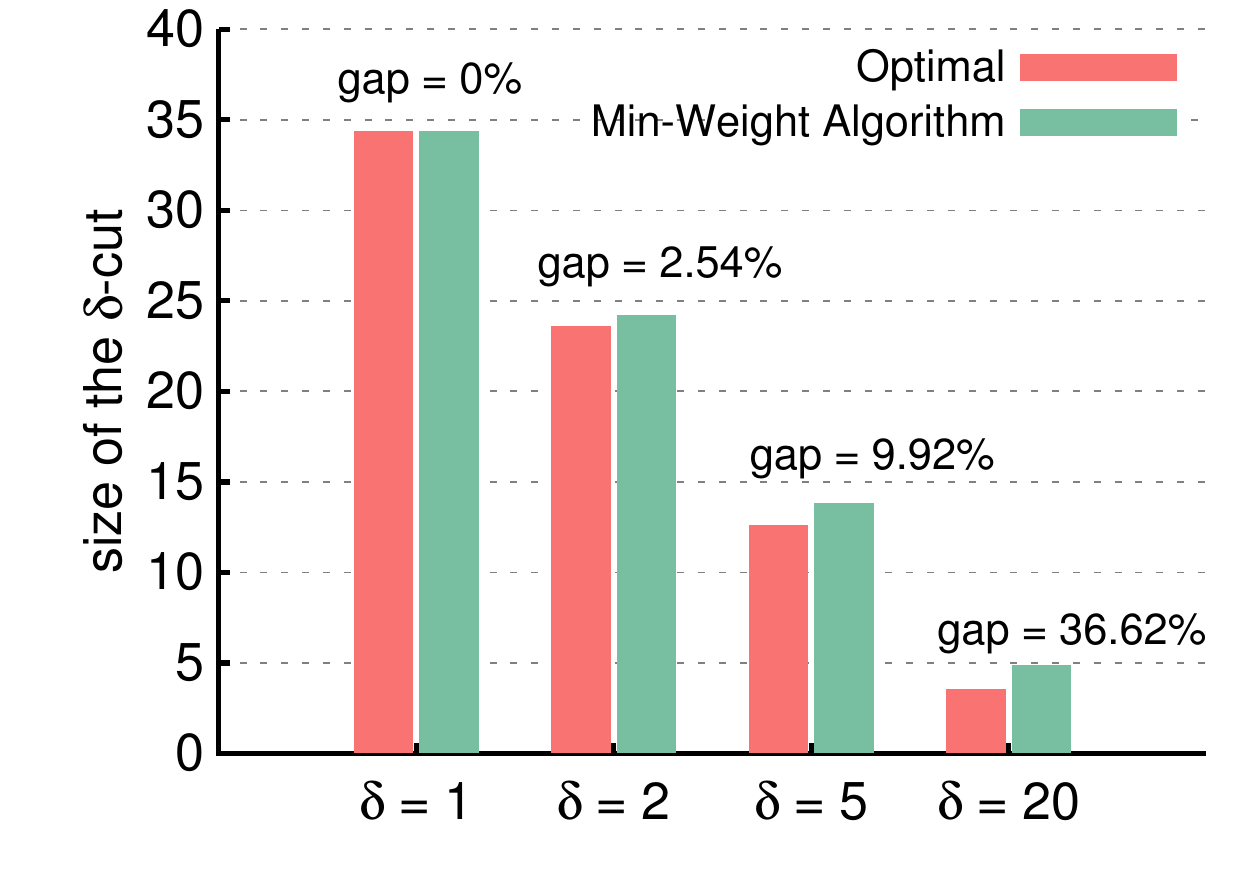}
\caption{Comparison between the min-weight algorithm (Algorithm \ref{heu}) and the optimal result to $\delta$-MINCUT.}
\label{gap_mincut}\vspace{-3mm}
\end{center}
\end{figure}

\section{Application: Bus Communication Networks}\label{app1}
In this section, we demonstrate how to use our survivability framework to facilitate the design of robust networks in practice. To be more specific, we exploit  $\delta$-disjoint journeys to design a \emph{survivable routing} protocol for a real-world bus communication network \cite{trace}. Each bus in the network has a pre-designed route and is equipped with an 802.11 radio that constantly scans for other buses. Since the route of each bus is designed in advance, we can make a \emph{coarse prediction} about bus mobility and the evolution of their communication topology. As a result, we can convert this bus communication network into a time-varying graph whose topology changes according to the estimated bus mobility. However, the prediction may not be perfect due to various reasons such as unexpected obstacles, traffic accidents, traffic jam, etc.  The goal of survivable routing is to reduce the packet loss rate due to these unpredictable failures.

In the rest of this section, we first present the design of the survivable routing protocol using $\delta$-disjoint journeys. Then we discuss trace statistics, simulation settings and results.
\subsection{Survivable Routing Protocol: DJR}\label{proto}
The basic idea of this protocol is to replicate each packet at the source and send these copies along multiple $\delta$-disjoint journeys obtained by solving $\delta$-MAXFLOW. When at least one of these copies reaches the destination, the original packet is successfully delivered. This replication-based protocol is referred to as Disjoint-Journey Routing (DJR). The advantages of DJR over other reliable routing protocols are as follows.

\vspace{1mm}

\noindent $\bullet$ {\emph{Simplicity of Deployment in Time-varying Networks.} Static networks usually deploy ARQ at the data link layer and TCP at the transport layer for error recovery. However, due to the lack of connectivity, it is not only difficult to get timely ACK at the sender but also hard to find opportunities for retransmissions. In contrast, DJR does not require any feedback, which greatly simplifies the data link layer and the transport layer (no need for error recovery). In addition, as a network-layer protocol, it can be  combined with FEC codes at the physical layer (e.g., erasure code \cite{redundancy}) to achieve a better performance.}

\vspace{1mm}

\noindent $\bullet$ {\emph{Temporal Diversity.} Traditional survivable routing protocols rely on spatial diversity, such as Disjoint-Path Routing (DPR)  \cite{dis_protect1}\cite{dis_protect2}, where spatially-disjoint paths are used to recover packets. However, spatial diversity is a demanding requirement in networks with sparse and intermittent connectivity. We will demonstrate that it is hard to find even two spatially-disjoint paths in the bus network. By comparison, DJR exploits temporal diversity to combat failures and is well suited for time-varying networks, especially when failures are transient.}

\vspace{1mm}

\noindent $\bullet$ {\emph{Two-dimensional Tunability.} Our survivability framework has two natural parameters, namely $n$ and $\delta$. Hence, the tunability of DJR is also in two dimensions: we can both tune the number of $\delta$-disjoint journeys to use, and also adapt the degree of temporal disjointness. By comparison, existing survivable routing protocols (e.g., \cite{DTN_routing1, DTN_routing2, DTN_routing3}) lack such flexibility. }
\subsection{Traces}\label{tr_stat}
We use the trace from UMassDieselNet \cite{trace} where a public bus transportation system was operated around Amherst, Massachusetts. The trace records the contacts among 21 buses in 9 days, which roughly reflects bus mobility over the pre-designed bus routes. We use such contact information as a coarse prediction for the states of bus-to-bus links in the 9-day period. However, we assume that the prediction is imperfect and unpredictable failures may disable these contacts (the failure model will be introduced in the next section).

To facilitate our subsequent discussion, we pre-process the raw trace and observe two important features of this bus communication network. The first observation is the ``bursty" structure of contacts between any two buses; that is, buses only communicate with each other occasionally. Figure \ref{trace_stat}(a) illustrates such a bursty structure for a typical pair of buses. The second observation is that most connections in this network last for only a short period of time. As is shown in Figure \ref{trace_stat}(b), most contacts span less than 20s.

\begin{figure}[t]
\begin{center}
\includegraphics[width=3.3in]{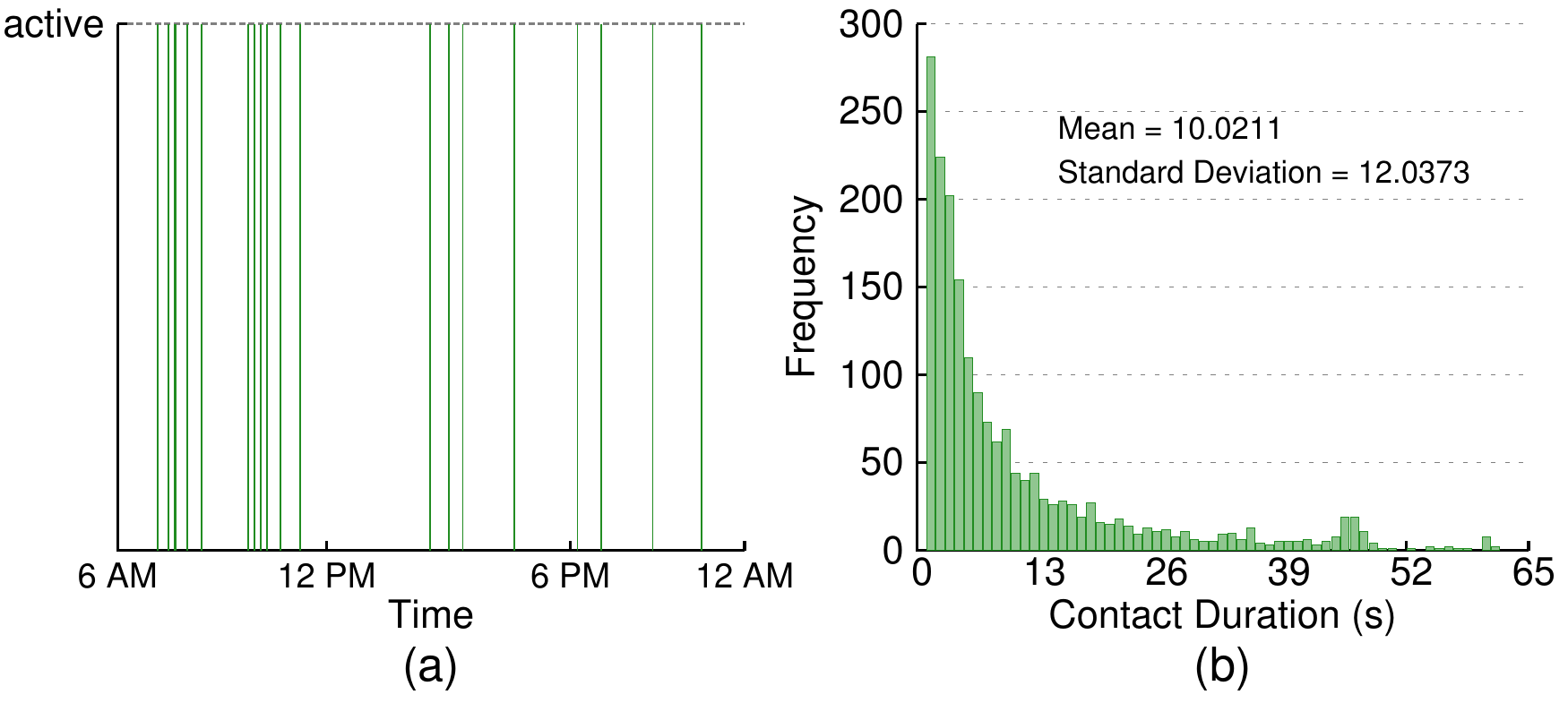}
\caption{Statistical structures of the bus communication network. (a) The bursty pattern of the contacts between a typical pair of buses. (b) Histogram for contact durations. Most contacts only last for a short period of time.}
\label{trace_stat}\vspace{-3mm}
\end{center}
\end{figure}


\subsection{Simulation Settings}
In our simulation, the slot length is identical to the trace resolution, i.e., one second. According to the measurement in \cite{trace}, the average transmission rate is about 1.64Mbps. If the packet size is set to be 1KB, the transmission time of one packet is nearly negligible as compared to the slot length, which implies zero link-traversal delay. Each packet has a deadline (DDL) after which it will be dropped from the network; naturally, the packet deadline can be modeled by the time horizon $T$ of the corresponding time-varying graph.  A packets is generated between a random source-destination pair immediately after the previous packet expires or gets delivered. In addition, at most $n$ copies are allowed, meaning that we can use at most $n$ $\delta$-disjoint journeys to send these copies.  Algorithm \ref{alg_maxflow1} is used to compute $\delta$-disjoint journeys.

Since it is impossible to precisely predict future topology changes, we impose random failures on the time-varying graph generated from the trace. For each link, we let failures occur in each slot with a certain probability $p$, and the duration of each failure is uniformly distributed within $[0,d]$ seconds. The performance metric is the packet loss rate, i.e., the fraction of packets that fail to reach the destination before the deadline.
\subsection{Total Number of $\mathlarger{\delta}$-Disjoint Journeys}
We first look at the maximum number of $\delta$-disjoint journeys in the bus communication network (Figure \ref{bus1}). First, it can be observed that there exist very few $\delta$-disjoint journeys in this network: less than three $\delta$-disjoint journeys when $\delta\ge 5$. Particularly, only one $\delta$-disjoint journey exists when $\delta$ is relatively large, which means that it is almost impossible to find even two journeys that are spatially disjoint (i.e., $\delta=T$). This observation indicates the lack of spatial connectivity in this bus network and implies the inefficiency of traditional Disjoint-Path Routing in networks with intermittent connectivity since such a protocol only relies on spatial diversity. Second, we can observe the diminishing return for the number of $\delta$-disjoint journeys: beyond a certain value of $\delta$, the increase of $\delta$ no longer reduces the number of $\delta$-disjoint journeys. Such a tendency is due to the bursty contact structure in this network (see Section \ref{tr_stat}). 
The final observation is that extending the packet deadline increases the total number of $\delta$-disjoint journeys since there are more transmission opportunities within a longer deadline.

\begin{figure}[t]
\begin{center}
\includegraphics[width=2.5in]{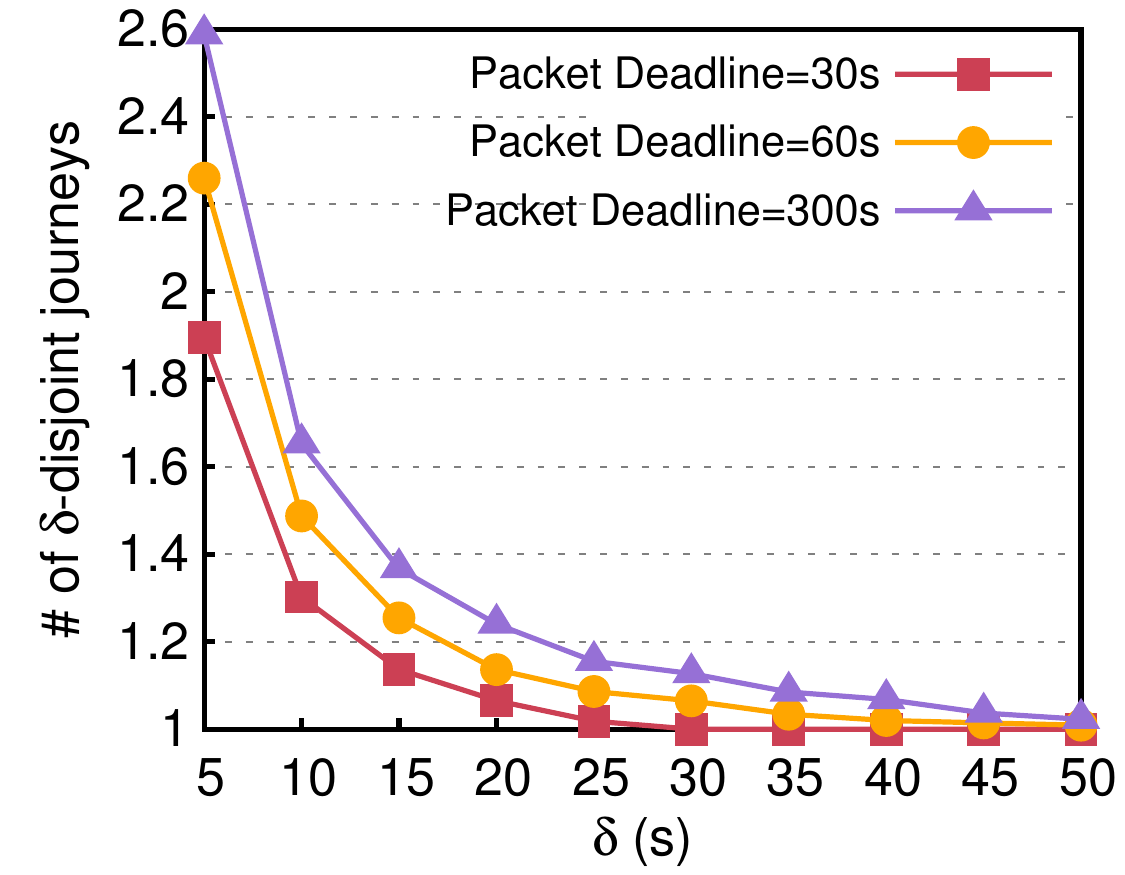}
\caption{The total number of $\delta$-disjoint journeys in the network.}
\label{bus1}\vspace{-3mm}
\end{center}
\end{figure}

\subsection{Tunability of DJR}
Next, we study the two-dimensional tunability of DJR (Figure \ref{bus2}). We first investigate the tunability of $n$, i.e., the maximum number of copies we are allowed to produce or the maximum number of $\delta$-disjoint journeys we can use. If we are  allowed to use only one of the $\delta$-disjoint journeys ($n=1$), DJR is ineffective and the packet loss rate remains at a high level regardless of the value of $\delta$. If we can use more $\delta$-disjoint journeys, the packet loss rate is significantly reduced (of course, more redundant copies are created).

The influence of $\delta$ is more interesting. With the increase of $\delta$, the packet loss rate first goes down and then increases; this tendency can be explained as follows. When $\delta$ is small, there exist many  $\delta$-disjoint journeys and we can choose any $n$ of them to transmit copies of packets. With a fixed number of disjoint journeys, it is known that larger temporal disjointness makes the network more robust since it can survive failures of longer duration. Hence, the packet loss rate first goes down. However, the increase of $\delta$ also leads to the reduction in the number of $\delta$-disjoint journeys (see Figure \ref{bus1}); beyond a certain value of $\delta$, the number of $\delta$-disjoint journeys becomes smaller than $n$ and we have to send copies over fewer than $n$ disjoint journeys, which means that the network can survive fewer failures. Therefore, although temporal disjointness continues to grow, the reduction in the  number of available disjoint journeys makes the loss rate increase. Moreover, we can observe that there exists an ``optimal" value of $\delta$ which minimizes the packet loss rate (highlighted by shaded circles). In fact, this optimal value is the maximum $\delta$ such that $\mathsf{MaxFlow}_{\delta} \ge n$.

\begin{figure}[t]
\begin{center}
\includegraphics[width=2.3in]{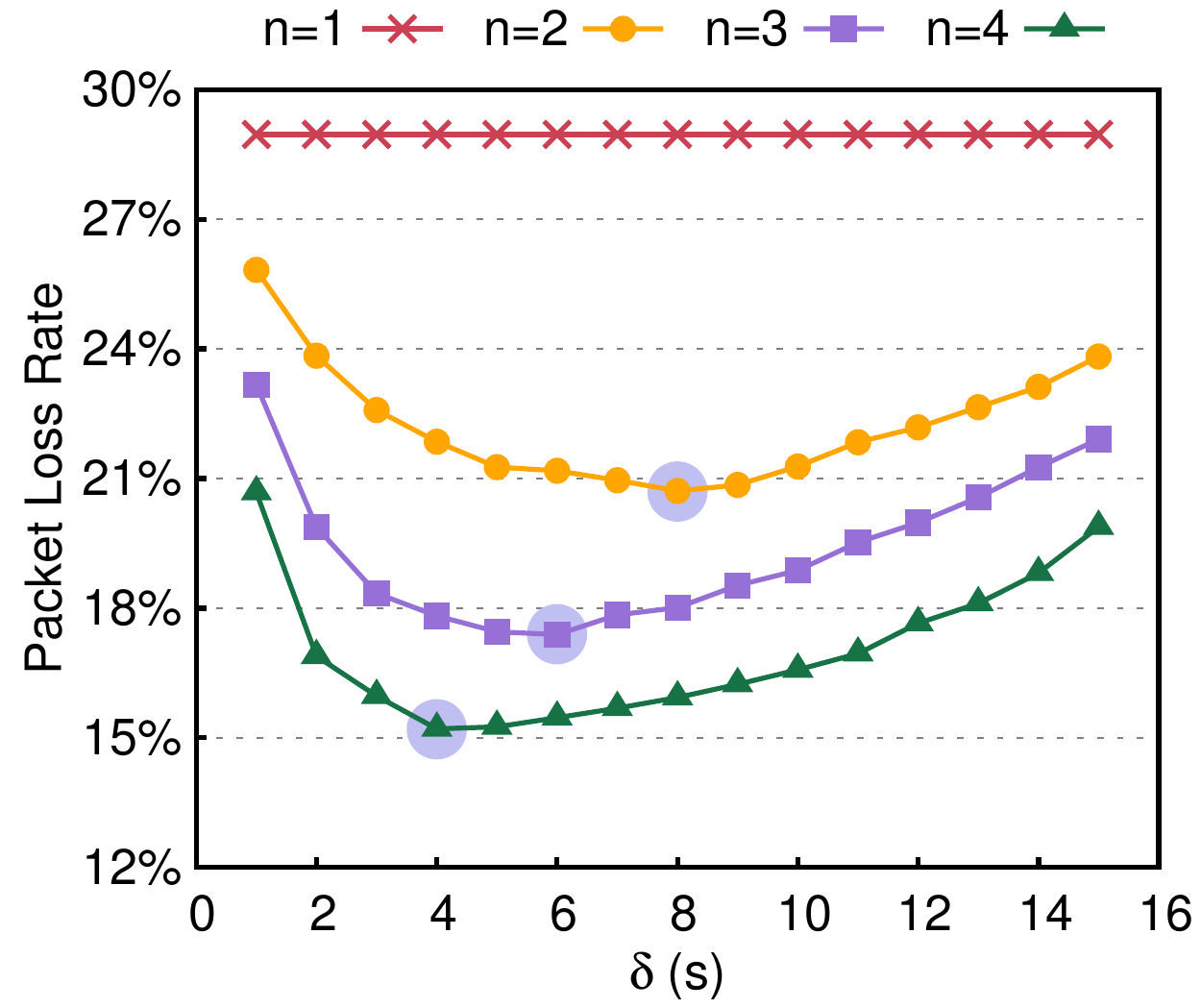}
\caption{Influence of $n$ and $\delta$ on packet loss rates (DDL=300s, $p$=0.05, $d$=60s).}
\label{bus2}\vspace{-3mm}
\end{center}
\end{figure}

\section{Related Work}\label{related_works}
\noindent \textbf{Time-varying Graphs.}
There is extensive literature seeking to define metrics for time-varying graphs, such as connectivity \cite{TVG1, con, TVG11}, distance \cite{TVG3}, centrality \cite{TVG6, TVG7}, diameter \cite{TVG8,TVG9}, etc. The combinatorial properties of time-varying graphs are also an active research area. For example, Kranakis \emph{et al.} focused on finding connected components in a time-varying graph; Ferreira \emph{et al.} investigated the complexity for computing the shortest journey \cite{TVG3} and the minimum spanning tree \cite{TVG11} (see the survey \cite{TVG2}).

\vspace{1mm}

\noindent \textbf{Survivability in Time-varying Networks.} Despite the extensive research on time-varying graphs, there is very little literature on survivability of time-varying networks. The closest work to ours was done by Berman \cite{vul} and Kleinberg \emph{et al.} \cite{con}. They discussed vulnerability in so-called ``edge-scheduled networks" or ``temporal networks" where each link is active for exactly one slot and only permanent failures happen. Our work considers a more general graph model while leveraging the temporal features of failures, thus generalizing their results. Scellato \emph{et al.} \cite{TVG4} investigated a similar problem in random time-varying graphs and proposed a metric called ``temporal robustness". By comparison, our framework is deterministic, thus guaranteeing the worst-case survivability. Li \emph{et al.} \cite{TVG-unreliable} studied a related but different problem in time-varying networks; specifically, they proposed heuristic algorithms to find the the min-cost subgraph of a probabilistic time-varying graph such that the probability that the subgraph is temporally connected exceeds a certain threshold.

\vspace{1mm}

\noindent \textbf{Time-varying Graphs and DTNs} An important application scenario of time-varying graphs is Delay Tolerant Networks (DTN), where nodes have intermittent connectivity and can only send packets opportunistically. The primary goal of DTN is  to improve the packet delivery ratio via some routing schemes, and there is extensive literature in this area, such as \cite{redundancy, DTN_routing1, DTN_routing2, DTN_routing3, TVG-unreliable}. In contrast, our work does not focus on any specific routing algorithm. Instead, this paper is intended to understand the inherent survivability properties of a time-varying network, which can facilitate the design of survivable routing algorithms in DTNs (e.g., Section \ref{app1}).
\section{Conclusions}\label{conclusion}
In this paper, we propose a new survivability framework for time-varying networks, namely $(n,\delta)$-survivability. In order to evaluate $(n,\delta)$-survivability, two metrics are proposed: $\mathsf{MinCut_{\delta}}$ and $\mathsf{MaxFlow_{\delta}}$.
We analyze the fundamental relationship between the two metrics and show that Menger's Theorem only conditionally holds in time-varying graphs. As a result, computing both survivability metrics is NP-hard. To resolve the computational intractability, we develop several approximation algorithms. Finally, we use trace-driven simulations to demonstrate the application of our framework in a real-world bus communication network.

\appendix
\section{Proofs}
\subsection{Proof to Theorem \ref{delta}} \label{proof-delta}
\subsubsection{Proof to Property (I)}
Consider a time-varying graph $\mathcal{G}$ with the source $s$ and the destination $d$.
Let $\mathsf{MaxFlow}$ be the maximum number of \emph{node-disjoint} paths from $s$ to $d$ in the Line Graph of $\mathcal{G}$ and $\mathsf{MinCut}$ be the cardinality of the smallest \emph{node cut} that separates $s$ and $d$ in the Line Graph. It is not hard to verify the following lemma.
\begin{lemma}\label{line_eq}
$\mathsf{MaxFlow_1=MaxFlow}$ and $\mathsf{MinCut_1=MinCut}$.
\end{lemma}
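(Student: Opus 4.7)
The plan is to derive both equalities directly from Observation \ref{line_corre}, which establishes a one-to-one correspondence between $s$-$d$ journeys in $\mathcal{G}$ and $s$-$d$ paths in the Line Graph $L(\mathcal{G})$. Under this correspondence, a contact $(e,t)$ lies on a journey $J$ if and only if the node $v_{e,t}$ lies on the corresponding $s$-$d$ path in $L(\mathcal{G})$. The crucial simplification in the case $\delta = 1$ is that both ``$1$-disjoint'' and ``$1$-removal'' are statements about individual contacts, which are in bijection with the non-terminal nodes of $L(\mathcal{G})$. So the problem essentially reduces to translating the definitions across this bijection.

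For the equality $\mathsf{MaxFlow_1} = \mathsf{MaxFlow}$, I would argue that a collection $\{J_1, \dots, J_k\}$ of $1$-disjoint journeys maps, under Observation \ref{line_corre}, to a collection $\{P_1, \dots, P_k\}$ of $s$-$d$ paths in $L(\mathcal{G})$, and that $1$-disjointness of the journeys is equivalent to (internal) node-disjointness of the paths. Indeed, journeys $J_i$ and $J_j$ fail to be $1$-disjoint precisely when they share some contact $(e,t)$, which occurs precisely when both $P_i$ and $P_j$ pass through the common node $v_{e,t}$. Taking maxima on both sides gives the desired equality.

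For $\mathsf{MinCut_1} = \mathsf{MinCut}$, I would observe that a $1$-removal corresponds bijectively to deleting a single non-terminal node of $L(\mathcal{G})$: removing the contact $(e,t)$ amounts to removing $v_{e,t}$. A set $R$ of contacts is a $1$-cut iff every $s$-$d$ journey uses at least one contact in $R$, which by Observation \ref{line_corre} is equivalent to saying that every $s$-$d$ path in $L(\mathcal{G})$ passes through at least one of the corresponding nodes, i.e., those nodes form a node cut separating $s$ and $d$ in $L(\mathcal{G})$. Minimizing cardinality on both sides yields the equality.

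The main obstacle, if any, is purely bookkeeping: ensuring the correspondence between journeys and paths handles the terminal nodes $s$ and $d$ correctly (they are never counted among ``intermediate'' nodes in the disjoint-path or node-cut formulations) and verifying that Observation \ref{line_corre} does preserve set-theoretic operations on contacts at the level of families of journeys, not merely individual journeys. Once the bijection is carefully set up, both equalities are immediate consequences of the definitions.
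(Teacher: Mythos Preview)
Your proposal is correct and is precisely the natural verification the paper has in mind; the paper itself does not give a proof beyond the remark ``It is not hard to verify the following lemma,'' and your argument via Observation~\ref{line_corre} (contacts $\leftrightarrow$ internal nodes of $L(\mathcal{G})$, so $1$-disjointness $\leftrightarrow$ internal node-disjointness and $1$-cuts $\leftrightarrow$ node cuts) is exactly that verification.
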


\noindent \textbf{Remark:} Lemma \ref{line_eq} does not holds for $\delta\ge 2$. For example, if $\delta=2$, there is only one $\delta$-disjoint journey in Figure \ref{line_example}(a) but there are two node-disjoint paths in its Line Graph.

\vspace{1mm}

Now we can apply the node-version Menger's Theorem to the Line Graph and obtain $\mathsf{MaxFlow=MinCut}$. By Lemma \ref{line_eq}, we can conclude that
\[
\mathsf{MaxFlow_1=MaxFlow=MinCut=MinCut_1}.
\]
\subsubsection{Proof to Property (II)}
The non-trivial part is to show that the gap ratio can be arbitrarily large. We construct a family of time-varying graphs $\{\mathcal{G}_k\}_{k\ge 1}$  such that  $\frac{\mathsf{MinCut_{\delta}}}{\mathsf{MaxFlow_{\delta}}}= k$ for any $\delta\ge 2$ in the $k$-th graph. The constructions for $k=1,2,3$ are shown in Figure \ref{bd}. We can observe that $\mathcal{G}_1$ is a single-level graph; $\mathcal{G}_2$ is built upon $\mathcal{G}_1$, where the first level is exactly $\mathcal{G}_1$; similarly, $\mathcal{G}_3$ is built upon $\mathcal{G}_2$, where the first two levels correspond to $\mathcal{G}_2$.

We use inductions to prove that $\mathsf{MaxFlow_{\delta}}=1$ while $\mathsf{MinCut_{\delta}}=k$ for any $\delta\ge 2$ in the $k$-th graph $\mathcal{G}_k$. For brevity, we only demonstrate the induction philosophy from  $\mathcal{G}_1$ to $\mathcal{G}_2$ while its generalization is easy.

\vspace{1mm}

\noindent $\bullet$ In $\mathcal{G}_1$, the source-destination pair is $(s,d_1)$. It is obvious that $\mathsf{MaxFlow_{\delta}}= \mathsf{MinCut_{\delta}}=1$ for any $\delta\ge 2$.

\vspace{1mm}

\noindent $\bullet$ In $\mathcal{G}_2$, the source-destination pair is $(s,d_2)$. We want to show that $\mathsf{MaxFlow_{\delta}}=1$ but $\mathsf{MinCut_{\delta}}=2$ for any $\delta\ge 2$. To see $\mathsf{MaxFlow_{\delta}}=1$, we notice that there are two possible choices for traveling from $s$ to $d_2$. One is via node $d_1$ and the other is to directly descend to level 2. The former choice yields only one $\delta$-disjoint journey from $s$ to $d_2$ since we know from $\mathcal{G}_1$ that there is only one $\delta$-disjoint journey from $s$ to $d_1$. For the latter choice, the only possibility is $s\rightarrow v_{2,1}\rightarrow v_{2,2}\rightarrow v_{2,3}\rightarrow d_2$ but this journey cannot be $\delta$-disjoint of any journey in the first choice (i.e., via node $d_1$) for any $\delta\ge 2$. Hence, there is only one $\delta$-disjoint journey from $s$ to $d_2$, i.e., $\mathsf{MaxFlow_{\delta}}=1$ for any $\delta\ge 2$. Now it remains to show $\mathsf{MinCut_{\delta}}=2$ and we prove this by showing that any single $\delta$-removal cannot disconnect $d_2$ from $s$. If the $\delta$-removal takes place in level 1, there exists a feasible journey from $s$ to $d_2$ via $s\rightarrow v_{2,1}\rightarrow v_{2,2}\rightarrow v_{2,3}\rightarrow d_2$ in slots 4, 5, 6, 7. If the $\delta$-removal occurs to some contact outside level 1, the journey from $s$ to $d_1$ is still available. Moreover, there exists at least one journey from $d_1$ to $d_2$ since there are two spatially disjoint journeys from $d_1$ to $d_2$ (one journey is via $d_1\rightarrow v_{2,1}\rightarrow v_{2,2}\rightarrow d_2$ in slots 3, 4, 5,  and the other journey is via $d_1\rightarrow v_{2,3}\rightarrow d_2$ in slots 3, 7). As a result, $d_2$ is still reachable from $s$ via $s\rightarrow d_1 \rightarrow d_2$. Now it is safe to conclude that any single $\delta$-removal cannot disconnect $d_2$ from $s$, which implies $\mathsf{MinCut_{\delta}} \ge 2$. Note that $d_2$ can be easily made unreachable from $s$ with 2 $\delta$-removals (e.g., disable the two contacts from $s$). Therefore, $\mathsf{MinCut_{\delta}} = 2$

\vspace{1mm}

Note that the key part in $\mathcal{G}_2$ is the ``shortcut edge" $v_{2,2}\rightarrow d_2$ which can only be used by journeys that travel through $d_1$. Following the similar line of induction (with minor modifications), we can show  $\mathsf{MaxFlow_{\delta}}=1$ and $\mathsf{MinCut_{\delta}}=k$ for any $\delta\ge 2$ in the $k$-th graph $\mathcal{G}_k$.

\begin{figure}[ht]
\begin{center}
\includegraphics[width=3.4in]{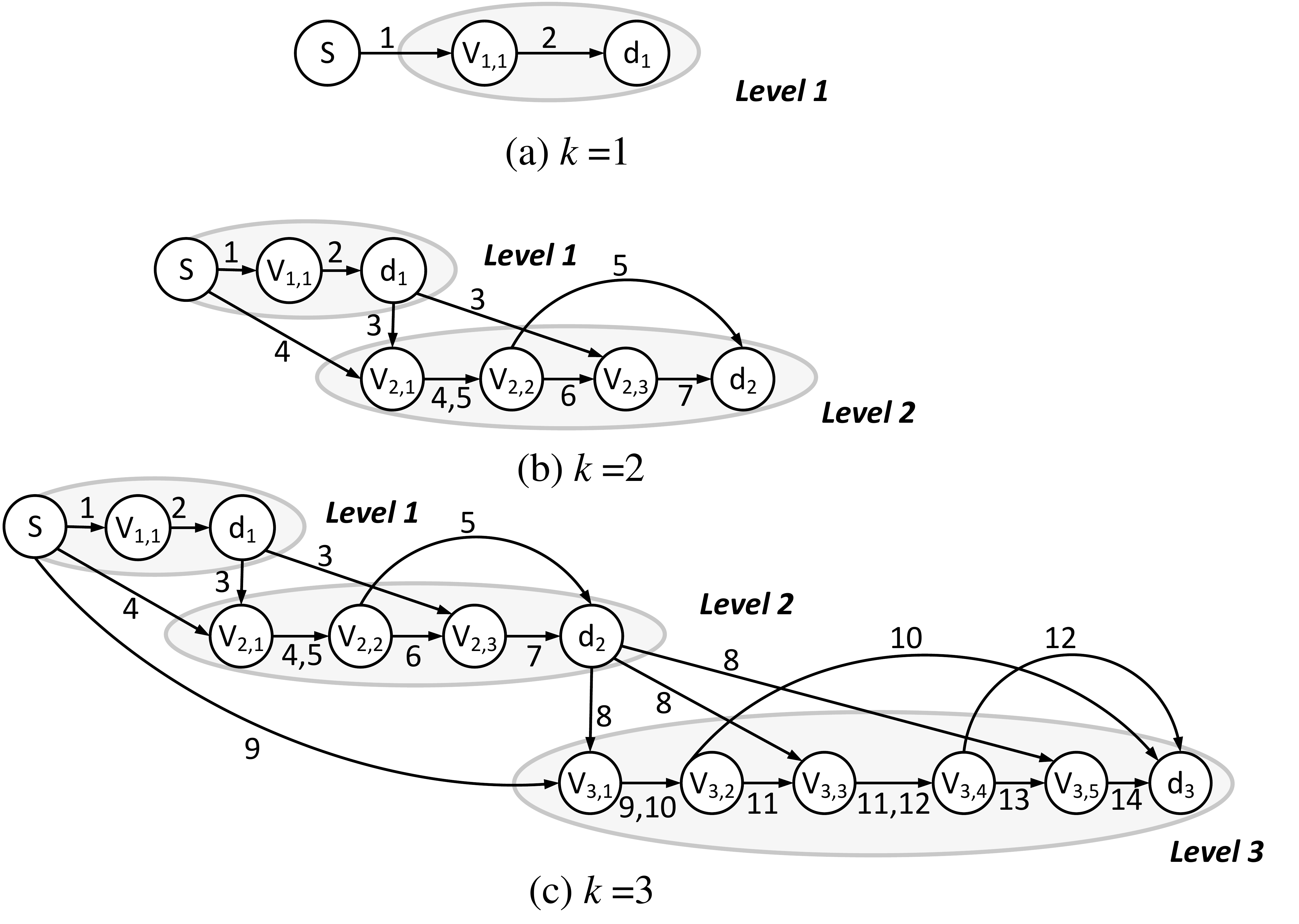}
\caption{Examples used in the proof to Property (II) in Theorem \ref{delta}. The source-destination pair is $(s,d_k)$ in graph $\mathcal{G}_k$ ($k=1,2,3$). Edge traversal delay is one slot.}
\label{bd}\vspace{-5mm}
\end{center}
\end{figure}
\subsection{Proof to Theorem \ref{max_hardness}} \label{proof_max_hardness}
The proof is based on a reduction from the Bounded-Length Edge-Disjoint Paths (BLEDP) problem which is NP-hard \cite{hard}.

\begin{itemize}

\item \textsc{Problem:} BLEDP.

\item \textsc{Instance:}
\begin{itemize}
\item [--] {A weighted digraph $G'=(V',E')$, where the weight on edge $e$ indicates its length (denoted by $l_e$). The length of each edge is a positive integer.}
\item [--] {The source-destination pair $(s,d)$.}
\item [--] {A bounded integer $L>0$ indicating the length bound.}
\end{itemize}

\item \textsc{Question:} Find the maximum number of edge-disjoint paths from $s$ to $d$ in $G'$ such that the length of each of these paths is upper-bounded by $L$.
\end{itemize}

Here we make an additional assumption that there exists no edge with its length greater than $L$ in $G'$. We also assume that there are no isolated nodes in $G'$. These assumptions do not change the complexity of BLEDP because we can simply remove these isolated nodes or long edges from $G'$ without any influence on the optimal solution.

The high-level idea of the reduction is to transform the ``spatial length bound" into a ``temporal length bound". Note that in our model, a natural temporal bound $T$ exists so we set $T=L$. In addition, we also need to make sure that whenever  edge $e$ is crossed, a ``temporal distance" of $l_e$ slots is traversed. Since it is assumed that edge-traversal delay is one time slot, we can expand each edge in series such that extra delay is incurred. To be more specific, if the length of edge $e$ is $l_e$, we replace this single edge by $l_e$ edges that are catenated in series; each of the catenated edges has one-slot traversal delay and is active in the entire time span. An example is illustrated in Figure \ref{series}.  It is trivial to check that BLEDP is equivalent to solving $\delta$-MAXFLOW in the constructed time-varying graph for $\delta=T$. Hence, $\delta$-MAXFLOW is NP-hard.

It remains to investigate the hardness of approximation for $\delta$-MAXFLOW.  Guruswami \emph{et al.} \cite{hard} proved that it is NP-hard to achieve $O(\sqrt{|E'|})$-approximation for BLEDP. In the constructed time-varying graph, we have $|E|=\sum_{e\in E'}l_e\le L|E'|$. Since $L$ is a bounded integer,  it follows that $|E'|=\Omega(|E|)$. Therefore, it is also NP-hard to achieve $O(\sqrt{|E|})$-approximation for $\delta$-MAXFLOW.

\begin{figure}[tbp]
\begin{center}
\includegraphics[width=2.15in]{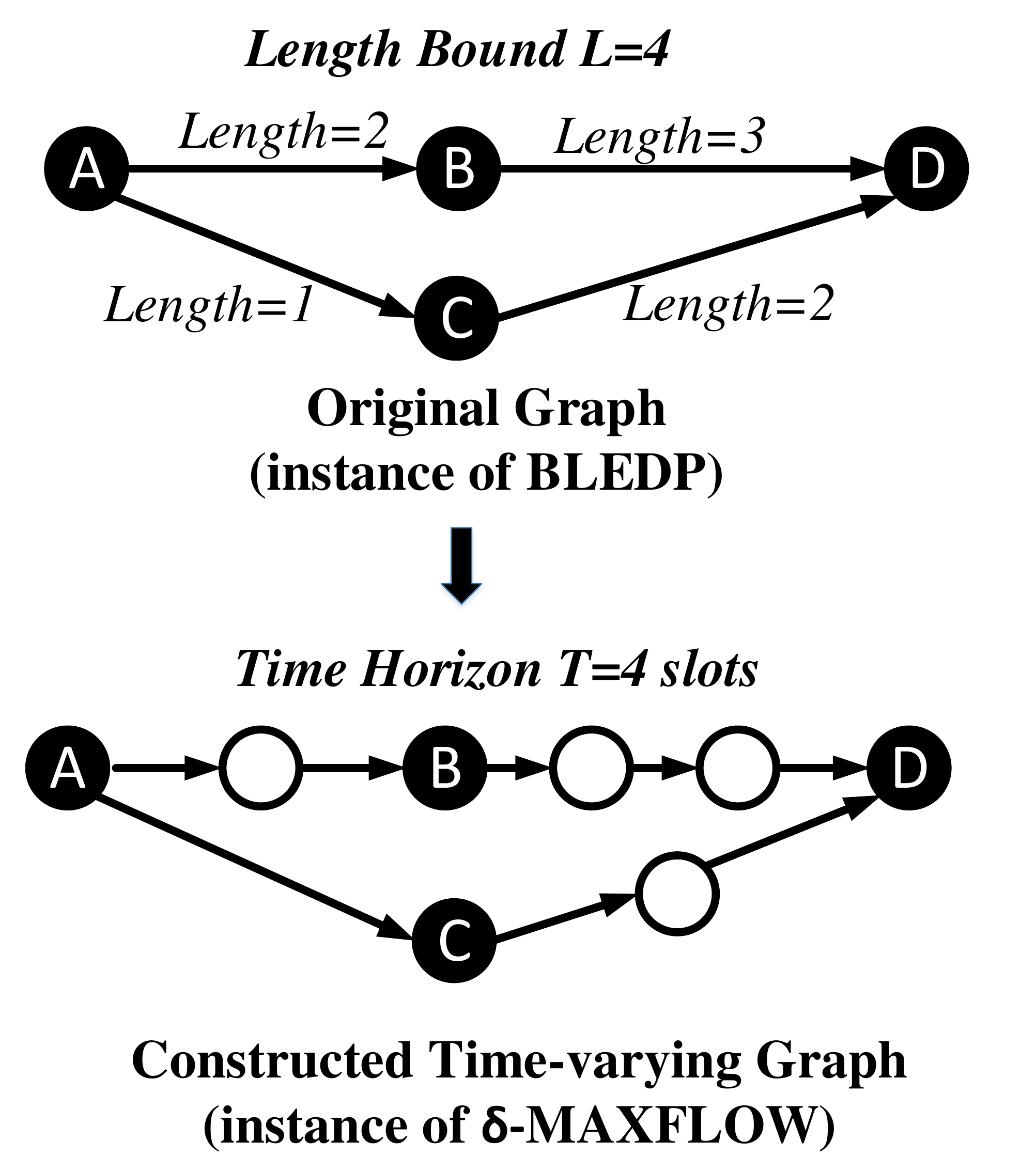}
\caption{Illustration of the reduction from BLEDP to $\delta$-MAXFLOW. Note that in the constructed time-varying graph, edge traversal delay is one time slot and each edge is active in the entire time span $\{1,2,3,4\}$}
\label{series}\vspace{-5mm}
\end{center}
\end{figure}
\subsection{Proof to Theorem \ref{ratio_maxflow}}\label{proof_ratio_maxflow}
If the the destination is unreachable from the source, both the optimal solution and the greedy algorithm will yield a result of zero, where no approximation gaps exist. Hence, it is enough to consider the scenario where the destination is reachable from the source.

Before the detailed proof, it is essential to define the notions of \emph{short paths} and \emph{long paths} in the Line Graph. Let $k$ be an \emph{arbitrary} positive integer. A short path consists of at most $k$ nodes while a long path is made up of more than $k$ nodes. Their corresponding journeys are called the \emph{short journey} (traversing at most $k$ edges) and the \emph{long journey} (traversing more than $k$ edges), respectively. Denote $\mathcal{J}^*=\{J^*_1,\cdots\}$ the optimal solution and $\mathcal{J}=\{J_1,\cdots\}$ the solution obtained by the greedy algorithm.

We first prove that the number of long journeys in $\mathcal{J}^*$ is at most $\frac{|E|(\frac{T}{\delta}+1)}{k}$.  Indeed, since journeys in $\mathcal{J}^*$ are $\delta$-disjoint, each edge can be traversed by at most $\lceil\frac{T}{\delta}\rceil$ journeys in $\mathcal{J}^*$. At the same time, each of the long journeys in $\mathcal{J}^*$ traverses more than $k$ edges so the total number of long journeys in $\mathcal{J}^*$ can be at most $\lfloor \frac{\lceil\frac{T}{\delta}\rceil|E|}{k}\rfloor\le \frac{|E|(\frac{T}{\delta}+1)}{k}$.

Then we prove that the number of short journey in $\mathcal{J}^*$ is at most $2k\times |\mathcal{J}|$. To show this point, we first prove that each short journey (say $J_j^*$) in $\mathcal{J}^*$ is interfered by some short journey (say $J_i$) in $\mathcal{J}$ (i.e., $J^*_j$ and $J_i$ use the same edge within $\delta$ slots). Note that each short journey in $\mathcal{J}^*$ must be interfered by at least one journey in $\mathcal{J}$ otherwise the greedy algorithm is not finished. Let $J_i\in \mathcal{J}$ be the journey that interferes with some journey $J^*_j\in\mathcal{J}^*$ \emph{for the first time}, i.e., journeys constructed in the greedy algorithm before $J_i$ do not interfere with $J^*_j$. In other words, when the greedy algorithm is constructing journey $J_i$, journey $J^*_j$ is also a candidate journey. Since $J_i$ is selected rather than $J^*_j$, it implies that the number of edges traversed by $J_i$ is less or equal to that of $J^*_j$. Due to the fact that $J^*_j$ is a short journey, we can conclude that $J_i$ is also a short journey.

Meanwhile, each short journey in $\mathcal{J}$ can interfere with at most $2k$ $\delta$-disjoint journeys because any short journey in $\mathcal{J}$ contains at most $k$ contacts and each of these contacts can interferes with at most 2 $\delta$-disjoint journeys. Hence, the total number of $\delta$-disjoint journeys that can be interfered by the short journeys in $\mathcal{J}$ is at most $2k\times |\mathcal{J}|$. Since we have shown that each short journey in $\mathcal{J}^*$ is interfered by at least one short journey in $\mathcal{J}$, it is safe to conclude that the number of short journeys in $\mathcal{J}^*$ is upper-bounded by $2k\times |\mathcal{J}|$, which means that
\begin{equation}\label{ap}
|\mathcal{J}^*|=|\mathcal{J}^*_{long}|+|\mathcal{J}^*_{short}|\le \frac{|E|(\frac{T}{\delta}+1)}{k}+2k\times |\mathcal{J}|
\end{equation}
Now we set $k$ to be the integer such that $\sqrt{|E|(\frac{T}{\delta}+1)}\le k< \sqrt{|E|(\frac{T}{\delta}+1)}+1$. Then it follows that
\[
\begin{split}
|\mathcal{J}^*|&< \sqrt{|E|(\frac{T}{\delta}+1)}+2\Big(\sqrt{|E|(\frac{T}{\delta}+1)}+1\Big)|\mathcal{J}|\\
&\le \sqrt{|E|(\frac{T}{\delta}+1)}|\mathcal{J}|+2\Big(\sqrt{|E|(\frac{T}{\delta}+1)}+1\Big)|\mathcal{J}|\\
& = \Big(3\sqrt{|E|(\frac{T}{\delta}+1)}+2\Big)|\mathcal{J}|
\end{split}
\]
where the first inequality follows from the setting of $k$ and the second inequality holds because of our premise that $|\mathcal{J}|\ge 1$ (i.e., the destination is reachable from the source). Since $T$ is a bounded integer and $\delta\le T$, we can finally conclude that Algorithm \ref{alg_maxflow1} achieves $O(\sqrt{|E|})$-approximation.
\subsection{Proof to Theorem \ref{heu_bound}}\label{proof_heu_bound}
We  make two simple observations regarding the weights. The first is that $\omega_{e,t}\ge \frac{1}{\delta}$ since $K_{e,t}\le \delta$. The second is that the sum of weights that can be removed by one $\delta$-removal is less or equals to 1. Indeed, consider a certain $\delta$-removal that deletes contacts $(e,t_1),(e,t_2),\cdots,(e,t_n)$. It should be obvious that $K_{e,t_i}\ge n$ for any $1\le i\le n$, which means $\sum_{i=1}^n\omega_{e,t_i}=\sum_{i=1}^n \frac{1}{K_{e,t_i}}\le \sum_{i=1}^n \frac{1}{n}=1$. Then we introduce the following lemma.
\begin{lemma}\label{heu_lemma}
 Let $C$ be an arbitrary set of contacts whose removals disconnect the source-destination pair. The following result holds
\[
\sum_{(e,t)\in C}\omega_{e,t}\le |\mathsf{Cover_{\delta}}(C)|\le \delta\sum_{(e,t)\in C}\omega_{e,t}.
\]
\end{lemma}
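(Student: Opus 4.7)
My plan is to derive both inequalities as direct consequences of the two observations about the weights $\omega_{e,t}$ already recorded just before the lemma statement, namely (a) $\omega_{e,t}\ge 1/\delta$ for every contact, and (b) the total weight of the contacts deleted by any single $\delta$-removal is at most $1$. The lemma is essentially a reformulation of these two facts tailored to the set $C$, so no new combinatorial ingredient is required; the only care needed is to track which contacts belong to $C$.

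For the lower bound $\sum_{(e,t)\in C}\omega_{e,t}\le |\mathsf{Cover}_\delta(C)|$, I plan to use a charging argument. Fix an optimal $\delta$-cover of $C$, say $\mathcal{R}$ with $|\mathcal{R}|=|\mathsf{Cover}_\delta(C)|$. Every contact in $C$ is covered by at least one $\delta$-removal in $\mathcal{R}$; assign each $(e,t)\in C$ to one such $\delta$-removal. For a fixed $\delta$-removal $r\in\mathcal{R}$, the contacts of $C$ charged to $r$ form a subset of the contacts deleted by $r$, so by observation (b) their weights sum to at most $1$. Summing over $r\in\mathcal{R}$ yields $\sum_{(e,t)\in C}\omega_{e,t}\le |\mathcal{R}|=|\mathsf{Cover}_\delta(C)|$.

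For the upper bound $|\mathsf{Cover}_\delta(C)|\le \delta\sum_{(e,t)\in C}\omega_{e,t}$, I plan to use a trivial feasible cover: for each contact $(e,t)\in C$, take one $\delta$-removal that deletes $(e,t)$. This gives a $\delta$-cover of $C$ of size at most $|C|$, hence $|\mathsf{Cover}_\delta(C)|\le |C|$. Combining with observation (a), which gives $|C|=\sum_{(e,t)\in C}1\le \sum_{(e,t)\in C}\delta\,\omega_{e,t}=\delta\sum_{(e,t)\in C}\omega_{e,t}$, yields the desired bound.

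Since both halves reduce to one-line arguments once observations (a) and (b) are in hand, I do not expect any real obstacle; the only subtlety worth being explicit about is that in the charging step the contacts covered by a single $\delta$-removal all share the same edge $e$ and lie in one $\delta$-window on that edge, which is precisely the setting in which observation (b) was established.
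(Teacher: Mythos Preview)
Your argument is correct. The lower bound is handled exactly as in the paper (the paper simply says it ``directly follows from the second observation,'' and your charging argument is the natural unpacking of that sentence). Your upper-bound argument, however, is genuinely more elementary than the paper's. The paper takes an optimal $\delta$-cover of $C$, decomposes $C$ edge by edge into disjoint blocks $C_{e,i}$ (one per $\delta$-removal), and then uses that each block contains at least one contact of weight $\ge 1/\delta$ to bound $|\mathsf{Cover}_\delta(C)|=\sum_{e}n_e\le \delta\sum_{e,i}W_{e,i}=\delta\sum_{(e,t)\in C}\omega_{e,t}$. This route requires one to check that the optimal cover can be taken so that the $C_{e,i}$ actually partition $C$, and the paper's phrasing in terms of ``removal heads $(e,t_i)$'' being contacts in $C_{e,i}$ is a bit delicate. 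Your route sidesteps all of that: the trivial cover with one $\delta$-removal per contact gives $|\mathsf{Cover}_\delta(C)|\le |C|$, and then $|C|\le \delta\sum_{(e,t)\in C}\omega_{e,t}$ is immediate from observation (a). The only thing the paper's more elaborate argument could buy is a potentially sharper constant if one tracked block sizes, but since both proofs land on the same factor $\delta$, your shortcut loses nothing.
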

\begin{proof}
The lower bound directly follows from the second observation mentioned above. Then we get down to proving the upper bound.

Denote $E_c$ the set of underlying edges in $C$. For each edge $e\in E_c$, suppose we need $n_e$ $\delta$-removals to completely delete $e$ from $C$, and the corresponding removal heads are $(e,t_1),(e,t_2),\cdots,(e,t_{n_e})$, where we assume $1\le t_1<t_2<\cdots<t_{n_e}\le T$. Denote $C_{e,i}$ the set of contacts deleted by the $\delta$-removal with head $(e,t_i)$ and define
\begin{equation}\label{pl}
W_{e,i}=\sum_{(e,t)\in C_{e,i}}\omega_{e,t},~\forall e\in E_c~\text{and}~1\le i\le n_e.
\end{equation}
Then we have
\begin{equation}\label{pq}
|\mathsf{Cover_{\delta}}(C)|=\sum_{e\in E_c}n_e=\sum_{e\in E_c}\sum_{i=1}^{n_e}\frac{\sum_{(e,t)\in C_{e,i}}\omega_{e,t}}{W_{e,i}},
\end{equation}
where the last equality is due to equation \eqref{pl}. We also notice that for any $e\in E_c$ and $1\le i\le n_e$
\[
\sum_{(e,t)\in C_{e,i}}\omega_{e,t}\ge \omega_{e,t_i},
\]
because contact $(e,t_i)$ is included in $\mathcal{C}_{e,i}$. By simple transformations, we obtain
\[
\begin{split}
\frac{\sum_{(e,t)\in C_{e,i}}\omega_{e,t}}{\omega_{e,t_i}}\ge 1=\frac{\sum_{ (e,t)\in C_{e,i}}\omega_{e,t}}{W_{e,i}}.
\end{split}
\]
Since $\omega_{e,t_i}\ge \frac{1}{\delta}$, we have
\[
\delta \sum_{(e,t)\in C_{e,i}}\omega_{e,t}\ge \frac{\sum_{(e,t)\in \mathcal{C}_{e,i}}\omega_{e,t}}{\omega_{e,t_i}} \ge \frac{\sum_{(e,t)\in C_{e,i}}\omega_{e,t}}{W_{e,i}}.
\]
Taking the above inequality into \eqref{pq}, we obtain
\[
|\mathsf{Cover_{\delta}}(C)|\le \delta \sum_{e\in E_c}\sum_{i=1}^{n_e}\sum_{(e,t)\in C_{e,i}}\omega_{e,t}=\delta\sum_{(e,t)\in C}\omega_{e,t},
\]
where the last equality holds because $C=\bigcup_{e\in E_c}\bigcup_{i=1}^{n_e}C_{e,i}$ and any two sets in the collection $\{C_{e,i}|e\in E_c, 1\le i\le n_e\}$ do not intersect.
\end{proof}

With the above lemma, we are ready to prove the approximation ratio for the min-weight algorithm. Suppose $C_{ALG}$ is the set of contacts disabled by the solution of the min-weight algorithm and $C^*$ is the set of contacts disabled by the optimal solution to $\delta$-MINCUT. Then according to Lemma \ref{heu_lemma}, we have
\[
\begin{split}
|\mathsf{Cover_{\delta}}(C_{ALG})|&\le \delta\sum_{(e,t)\in {C_{ALG}}}\omega_{e,t}.
\end{split}
\]
Since the min-weight algorithm first finds the minimum number of 1-removals that can disconnect the source-destination pair in the weighted time-varying graph, we have
\[
\sum_{(e,t)\in C_{ALG}}\omega_{e,t}\le  \sum_{(e,t)\in C^*}\omega_{e,t}.
\]
This implies that
\[
|\mathsf{Cover_{\delta}}(C_{ALG})| \le \delta \sum_{(e,t)\in C^*}\omega_{e,t}\le \delta|\mathsf{Cover_{\delta}}(C^*)|,
\]
where the last inequality is due to the lower bound in Lemma \ref{heu_lemma}. Therefore, $\delta$-approximation is achieved by the min-weight algorithm.

\begin{thebibliography}{1}

\bibitem{vehicular} X. Zhang, J. Kurose, B. N. Levine, D. Towsley, and H. Zhang, ``Study of a bus-based disruption-tolerant network: Mobility
modeling and impact on routing," in \emph{Proc. ACM MobiCom}, 2007.

\bibitem{trace}  A. Balasubramanianm, B. N. Levine, and A. Venkataramani, ``Enabling Interactive Applications for Hybrid Networks,"  \emph{ACM Mobicom}, 2008.

\bibitem{space1} J. Mukherjee and B. Ramamurthy, ``Communication technologies and architectures for space network and interplanetary Internet," in \emph{IEEE Commun. Surv. Tutorials}, vol. 15, no. 2, pp. 881–897, 2013.

\bibitem{space2} S. Burleigh and A. Hooke, ``Delay-tolerant networking: An approach to interplanetary internet," in \emph{IEEE Commun. Mag.}, vol. 41, no. 6, pp. 128–136, pp. 128–136, 2003.

\bibitem{mobile-sensor1} Y. Wang, H. Dang, and H. Wu, ``A survey on analytic studies of delay-tolerant mobile sensor networks: Research articles," in \emph{Wirel. Commun. Mob. Comput.}, vol. 7, no. 10, pp. 1197–1208, 2007.

\bibitem{mobile-sensor2} P. Juang, H. Oki, Y. Wang, M. Martonosi, L.S. Peh, and D. Rubenstein, ``Energy-efficient computing for wildlife tracking: Design
tradeoffs and early experiences with Zebranet," in \emph{ACM SIGOPS Oper. Syst. Rev.}, vol. 36, no. 5, pp. 96–107, 2002.

\bibitem{whitespace1} P. Bahl, R. Chandra, T. Moscibroda, R. Murty, and M. Welsh, ``White Space Networking with Wi-Fi like Connectivity," in \emph{ACM SIGCOMM}, 2009.

\bibitem{whitespace2} Bozidar Radunovic, Ranveer Chandra, and Dinan Gunawardena, ``Weeble: Enabling Low-Power Nodes to Coexist with
High-Power Nodes in White Space Networks," in \emph{ACM CoNext}, 2012.

\bibitem{whitespace3} Q. Liang, X. Wang, X. Tian, F. Wu, and Q. Zhang, ``Two-Dimensional Route Switching in Cognitive Radio Networks: A Game-Theoretical Framework," in \emph{IEEE/ACM Transactions on Networking}, vol. 23, no. 4, pp. 1053-1066, 2015.

\bibitem{mmWave} S. Rangan, T. Rappaport, and E. Erkip, ``Millimeter Wave Cellular Wireless Networks: Potentials and Challenges," in \emph{Proceedings of the IEEE}, vol. 102, no. 3, pp.366-385, 2014.
    
\bibitem{duty1} Y. Gu and T. He, ``Dynamic switching-based data forwarding for low-duty-cycle wireless sensor networks," in \emph{IEEE Trans. Mobile Comput.}, vol. 10, no. 12, pp. 1741–1754, Oct. 2011.

\bibitem{duty2} L. Chen, Y. Gu, S. Guo, T. He, Y. Shu, F. Zhang, and J. Chen, ``Group-based discovery in low-duty-cycle mobile sensor
networks," in \emph{Proc. IEEE Commun. Soc. Conf. on Sensor Mesh Ad Hoc Commun. Netw.}, pp. 542–550, 2012.

\bibitem{databse} FCC. Order, FCC 11-131, 2011.

\bibitem{human-mobility} C. Song, Z. Qu, N. Blumm, and A. Barabasi, ``Limits of predictability in human mobility," in \emph{Science}, vol. 327, pp. 1018–1021, 2010.

\bibitem{mobile-social} V. Srinivasan, M. Motani, and W.T. Ooi, ``Analysis and implications of student contact patterns derived from campus schedules," in \emph{Proc. ACM Mobicom}, 2006.

\bibitem{sensor1} Y. Gu and T. He, ``Dynamic switching-based data forwarding for low-duty-cycle wireless sensor networks," in \emph{IEEE Trans. Mobile Comput.}, vol. 10, no. 12, pp. 1741–1754, Oct. 2011.

\bibitem{sensor2} F. Li, S. Chen, S. Tang, X. He, and Y. Wang, ``Efficient topology design in time-evolving and energy-harvesting wireless sensor networks," in \emph{Proc. IEEE Mob. Ad Hoc Sensor Syst.}, 2013, pp. 1–9.


\bibitem{TVG1} J. Whitbeck, M. Amorim, V. Conan, and J. Guillaume, ``Temporal Reachability Graphs,"  \emph{ACM Mobicom}, 2012.
\bibitem{TVG2} A. Casteigts, P. Flocchini, W. Quattrociocchi, and N. Santoro, ``Time-varying graphs and dynamic networks,"  \emph{Ad-hoc, Mobile, and Wireless Networks}, vol. 6811, pp. 346-359, 2011.
\bibitem{TVG3} B. Xuan, A. Ferreira, and A. Jarry, ``Computing the shortest, fastest, and foremost journeys in dynamic networks,"  \emph{International Journal of Foundations of Computer Science}, vol. 14, pp. 267-285, 2003.
\bibitem{TVG4}  S. Scellato, I. Leontiadis, C. Mascolo, P. Basuy, and M. Zafer, ``Evaluating Temporal Robustness of Mobile Networks,"  \emph{IEEE Transactions on Mobile Computing}, vol. 12,  no. 1, pp. 105-117, 2013.


\bibitem{LP} Dimitris Bertsimas and John N. Tsitsiklis. \emph{Introduction to Linear Optimization}. Athena Scientific, 1997.

\bibitem{vul} K. A. Berman, ``Vulnerability of Scheduled Networks and a Generalization of Menger's Theorem,"  in \emph{Networks}, John Wiley \& Sons, vol. 28, pp. 125-134, 1996.
\bibitem{con} D. Kempe, J. Kleinberg, and A. Kumar, ``Connectivity and Inference Problems for Temporal Networks," \emph{ACM STOC}, 2000.
\bibitem{hard} V. Guruswami, S. Khanna, R. Rajaraman, B. Shepherd, and M. Yannakakis, ``Near-Optimal Hardness Results and Approximation Algorithms for Edge-Disjoint Paths and Related Problems," \emph{ACM STOC}, 1999.
\bibitem{dis_protect1} A. Srinivas and E. Modiano, ``Minimum Energy Disjoint Path Routing in Wireless Ad Hoc Networks,"  \emph{ACM Mobicom}, 2003.
\bibitem{dis_protect2}  G. Kuperman and E. Modiano, ``Disjoint Path Protection in Multi-Hop Wireless Networks with Interference Constraints,"  \emph{IEEE INFOCOM}, 2013.
\bibitem{redundancy} S. Jain, M. Demmer, R. Patra, and K. Fall, ``Using Redundancy to Cope with Failures in a Delay Tolerant Network,"  \emph{ACM SIGCOMM}, 2005.
\bibitem{DTN_routing1} S. Jain, K. Fall, and R. Patra, ``Routing in a delay-tolerant network,"  \emph{ACM SIGCOMM}, 2004.
\bibitem{DTN_routing2} A. Vahdat and D. Becker, ``Epidemic routing for partially connected ad hoc networks,"  Technical Report, Department of Computer Science, Duke University, 2000.
\bibitem{DTN_routing3} A. Oria, and O. Scheln, ``Probabilistic routing in intermittently connected networks,"  \emph{ACM MobiHoc}, 2003.
\bibitem{TVG-unreliable} F. Li, S. Chen, M. Huang, Z. Yin, C. Zhang, and Y. Wang, ``Reliable Topology Design in Time-Evolving Delay-Tolerant Networks with Unreliable Links," \emph{IEEE Transactions on Mobile Computing,} vol. 14, no. 6, pp. 1301-1314, 2015.
\bibitem{TVG6} H. Kim and R. Anderson, ``Temporal node centrality in complex networks,"  \emph{Physical Review E}, vol. 85, no. 2, id. 026107, 2012.
\bibitem{TVG7} R. Pan and J. Saramaki, ``Path lengths, correlations, and centrality in temporal networks,"  \emph{Physical Review E}, vol. 84, no. 1, id. 016105, 2011.
\bibitem{TVG8} A. Chaintreau, A. Mtibaa, L. Massoulie, and C. Diot, ``The diameter of opportunistic mobile networks,"  \emph{ACM CoNEXT}, 2007.
\bibitem{TVG9} J. Leskovec, J. Kleinberg, and C. Faloutsos, ``Graphs over time: densification laws, shrinking diameters and possible explanations,"  \emph{ACM SIGKDD}, 2005.
\bibitem{TVG10} S. Pierre, M. Barbeau, and E. Kranakis, ``Complexity of Connected Components in Evolving Graphs and the Computation of Multicast Trees in Dynamic Networks,"  \emph{Ad-Hoc, Mobile, and Wireless Networks}, vol. 2865, pp. 259-270, 2003.

\bibitem{TVG11} A. Ferreira1 and A. Jarry, ``Minimum-Energy Broadcast Routing in Dynamic Wireless Networks,"  \emph{Journal of Green Engineering}, vol. 2, no. 2, pp. 115-123, 2012.

\bibitem{line_ref} L. W. Beineke, ``Characterizations of derived graphs,"  \emph{Journal of Combinatorial Theory}, vol. 9, no. 2, pp. 129-135, 1970.

\end{thebibliography}
\end{document}